\DeclareMathOperator*{\argmin}{argmin}
\newtheorem{theorem}{Theorem}
\newdefinition{remark}{Remark}
\journal{Elsevier}
\begin{document}

\begin{frontmatter}



\title{An Equivalent Graph Reconstruction Model and its Application in Recommendation Prediction}

\author[label1]{Guangrui Yang}
\ead{gyang29@cityu.edu.hk}
 \author[label2,label3]{Lihua Yang}
 \ead{mcsylh@mail.sysu.edu.cn}
 \author[label2]{Qing Zhang}
 \ead{ zhangq369@mail2.sysu.edu.cn}
 \author[label4]{Zhihua Yang\corref{cor1}}
 \cortext[cor1]{Corresponding author}
 \ead{yangzh@ghufe.edu.cn}
 \address[label1]{Department of Mathematics, City University of Hong Kong, Hong Kong}
 \address[label2]{School of Mathematics, Sun Yat-sen University, Guangzhou, 510275, China}
 \address[label3]{Guangdong Province Key Laboratory of Computational Science}
\address[label4]{School of Information Science, Guangdong University of Finance and Economics, Guangzhou, 510320, China}

\begin{abstract}
Recommendation algorithm plays an important role in recommendation system (RS), which predict users' interests and preferences for some given items based on their known information. Recently, a recommendation algorithm based on graph Laplacian regularization was proposed, which treats the prediction problem of the recommendation system as a reconstruction issue of small samples of the graph signal under the same graph model. Such a technique takes into account both known and unknown labeled samples information, thereby obtaining good prediction accuracy. However, when the data size is large, solving the reconstruction model is computationally expensive even with an approximate strategy. In this paper, we propose an equivalent reconstruction model that can be solved exactly with extremely low computational cost. Finally, a final prediction algorithm is proposed. We find in the experiments that the proposed method significantly reduces the computational cost while maintaining a good prediction accuracy.
\end{abstract}



\begin{keyword}
Recommendation algorithm \sep recommendation system \sep graph laplacian regularization \sep graph reconstruction model \sep  graph signal processing.

\end{keyword}

\end{frontmatter}


\section{Introduction}
Recommendation algorithm plays an important role in recommendation system (RS), which predict users' interests and preferences for some given items based on their known information. 
As a consequence, recommendation algorithms are widely used by major electronic platforms (e.g. Amazon, Youtube, and Netflix) to recommend products of interest to users, and have received widespread attention.

The core of the recommendation algorithm is rating prediction, that is, predicting the ratings of all users for all items.
Specifically, if we denote the user set and the item set by $\mathcal{U}$ and $\mathcal{I}$, then the issue of rating prediction becomes to predict a score matrix $\mathbf{S}=\Big(s(u,i)\Big)_{u\in\mathcal{U},i\in\mathcal{I}}\in\mathbb{R}^{|\mathcal{U}|\times|\mathcal{I}|}$, where $s(u,i)$ is the rating given by user $u$ for item $i$ and $|\cdot|$ indicates the cardinality of a set. In real-world applications, often only few users' ratings can be observed, i.e., only few ratings in $\mathbf{S}$ are known. Therefore, we need to predict the unknown ratings of $\mathbf{S}$ according to the known ratings, so as to obtain a complete score matrix $\mathbf{S}$ \cite{2011RS}. Once all ratings in $\mathbf{S}$ are predicted, recommendations can be implemented, i.e., items with high ratings that may be of interest to users can be recommended.

The rating prediction problem of $\mathbf{S}$ has received attention since the 1990s \cite{1992CF,1998RS_classification}. Over the past few decades, a variety of different prediction methods have emerged. A common traditional prediction method is matrix completion (MC) \cite{2008Cai,2010MC_WithNoise,2010Zhang_MC}.  Assuming that $\mathbf{S}$ is a low rank matrix, MC estimates the unknown ratings of $\mathbf{S}$ by solving a low-rank recovery problem. Compared to MC, collaborative filtering (CF) is another traditional prediction technique \cite{1994CF_Architecture,1999CF_algorithm,2011CF,2014CF_Bigdata}. It first establishes the similarity between users (or items), and then predicts the unknown rating of user $u$ for item $i$ based on the ratings of other users who are most similar to user $u$. That is, the key idea of CF is to infer the preference of an active user towards a given item based on the opinions of some similar-minded users in the system. 
Furthermore, by treating the ratings denoted by integers as labels, the prediction problem of $\mathbf{S}$ can also be regarded as a classification problem. Therefore, besides MC and CF, some classical classification methods in traditional machine learning are encouraged to predict $\mathbf{S}$ in RSs \cite{2002WebMining,2011feedbackRecommendation,2011Recommender,2002CollaborativeBayesianClassifier,2006SVM-basedTVpro}. 

However, MC, CF and traditional machine learning methods discard a large amount of unlabeled data, which often contain a lot of useful information for prediction \cite{2006Manifold,2013Hady}, and thus perform poorly in terms of prediction accuracy. 
To improve prediction accuracy, the authors in \cite{2020Yang} considered using the information from both labeled and unlabeled data, and thus proposed a prediction model based on graph Laplacian regularization in a Reproducing Kernel Hilbert Space (RKHS) to solve the prediction problem (see model \eqref{Ori_model}).

More specifically, \cite{2020Yang} constructed a graph model to mine the inherent information of the data. Under this graph model, users (or items) and their relationships are modeled as nodes and edge weights of graph, respectively. Then, based on the assumption that similar users (items) have similar ratings, each column (or row) of $\mathbf{S}$ can be regarded as a smooth signal on the graph.

In the view of graph signal processing, such a prediction technique in \cite{2020Yang} treats the prediction problem of $\mathbf{S}$ as a reconstruction issue of small samples (i.e., known ratings) of the graph signal under the same graph model. Finally, when considering the user-based prediction (user as node), the prediction problem for each item is equivalent to a quadratic unconditional optimization problem. Then, it requires $O(n^{3})$ computations to solve analytical the quadratic unconditional optimization problem, where $n=|\mathcal{U}|$ is the number of users. For $m$ items, the optimization problem should be solved $m$ times and thus, a total of $O(n^{3}m)$ computations are required to predict the complete score matrix $\mathbf{S}$, which is computationally expensive for large $n$ and $m$. To reduce the computational complexity, \cite{2020Yang} proposed an approximate solution strategy to solve the original reconstruction model based on the reconstruction of bandlimited graph signal \cite{2013Shuman_gsp,2016Vertex,2014DGSP}, which reduces the computational cost from $O(n^{3}m)$ to $O(n^{2}m(k_{b}+\ell))$ to predict the last $m-1$ items$\footnote{Note that both methods require $O(n^{3})$ computations to predict the first item.}$, where $k_b$ is the reconstruction bandwidth and $\ell$ is the number of known labels$\footnote{When considering that $k_{b}$ and $\ell$ are much smaller than $n$, $O(n^{2}m(k_{b}+\ell))$ can be written as $O(n^{2}m)$, the same as that in \cite{2020Yang}. But no such considerations are made in this paper.}$.

However, this approximate method just obtains an approximate solution to the original reconstruction model. Moreover, when the data size is very large, i.e., both $n$ and $m$ are very large, it is still computationally expensive. Based on the idea in \cite{2020Yang}, we realized that in order to address the multi-item prediction problem in RS more efficiently, it is necessary to avoid solving higher-order linear equations. For this purpose, to exactly solve the original reconstruction model and meanwhile reduce the computational complexity, we modify the original reconstruction model of \cite{2020Yang} to obtain its equivalent reconstruction model. In this sense, the key contributions of this paper are summarized as follows:

\begin{enumerate}
  \item  An equivalent reconstruction model with that of \cite{2020Yang} is proposed, which allows us to find solutions of the reconstruction model in a much lower-dimensional subspace.
  \item A strategy is designed skillfully to exactly solve the reconstruction model with extremely low computational cost.
  \item A new recommendation algorithm based on the proposed equivalent reconstruction model is designed, which leads excellent experimental results both on prediction accuracy and computational complexity.
\end{enumerate}

The rest of this paper is organized as follow. In Section $\ref{Sec_primevalmodel}$, we first introduce the original prediction model based on graph Laplacian regularization in recommendation system. In Section $\ref{Sec_modifiedmodel}$, an equivalent prediction model of the original model is proposed. Then, an efficient method is designed skillfully to solve the original model accurately and further reduces the computational cost. At the end of this section, we provide a computational complexity analysis of the proposed method. In Section $\ref{Sec_fianl_prediction_alg}$, a final prediction algorithm based on the proposed equivalent prediction model is completed for recommendation system. In Section $\ref{Sec_exp}$, several experiments are conducted to test the proposed technology. Finally, a brief conclusion of this paper is given in Section $\ref{Sec_conclusion}$.

\section{Preliminary}\label{Sec_primevalmodel}
\subsection{Existing model}
In this section, we first introduce the original prediction model proposed in \cite{2020Yang}. For the sake of simplicity, we focus the used-based prediction, and the results can be generalized to the item-based prediction.

 Given a RS with $n$ users, and suppose that each user corresponds to a $d$-dimensional feature vector $\mathbf{v}_{j}$ uniquely \cite{2020Yang}. We denote by $\mathcal{V}=\{\mathbf{v}_{1},\mathbf{v}_{2},...,\mathbf{v}_{n}\}$ the all users represented in the feature space. For an item, the ratings given by all users can be viewed as a function on the set $\mathcal{V}$, i.e.,
 $$f:~\mathcal{V}\rightarrow\mathbb{R},$$
 where $f(\mathbf{v}_{j})$ denotes the ratings given by the user $\mathbf{v}_{j}$ for this item, which is also called the label of user $\mathbf{v}_{j}$. In this sense, $$\mathbf{f}=(f(\mathbf{v}_{1}),f(\mathbf{v}_{2}),\dots,f(\mathbf{v}_{n}))^{T}\in\mathbb{R}^{n}$$ can be viewed as a graph signal defined over $\mathcal{V}$. Then, the prediction problem becomes:   given labeled samples $X_{\ell}=\{(\mathbf{v}_{1},y_{1}),(\mathbf{v}_{2},y_{2}), . . . , (\mathbf{v}_{\ell},y_{\ell})\}$, how to predict the ratings of other users$\footnote{Without loss of generality, we assume that the ratings given by the first $\ell$ users are known.}$, i.e., how to find

 $$f:\mathcal{V}\rightarrow\mathbb{R},~~\textrm{s.t.}~~f(\mathbf{v}_{j})=y_{j},~j=1,2,\dots,\ell.$$

  To address this issue, \cite{2020Yang} proposed the following prediction model based on the manifold assumption and the smoothness assumption on the graph:

 \begin{align}\label{Ori_model}
 \argmin_{f\in\mathcal{H}_{\mathcal{K}}}\sum_{j=1}^{\ell}(f(\mathbf{x}_{j})-y_{j})^{2}+\lambda\|f\|^{2}_{\mathcal{K}}+\gamma\mathbf{f}^{T}\mathbf{L}\mathbf{f},
\end{align}
 where $\mathcal{H}_{\mathcal{K}}$ is the Reproducing Kernel Hilbert Space (RKHS) with respect to the kernel function $\mathcal{K}: \mathcal{V}\times\mathcal{V}\rightarrow\mathbb{R}$, i.e.,
 $$\mathcal{H}_{\mathcal{K}}:=\Big\{\sum_{j=1}^{n}a_{j}\mathcal{K}(\cdot,\mathbf{v}_{j}),~~a_{1},a_{2},\dots,a_{n}\in\mathbb{R}\Big\},$$
 with
 $$\big\langle\sum_{j=1}^{n}a_{j}\mathcal{K}(\cdot,\mathbf{v}_{j}),\sum_{j=1}^{n}b_{j}\mathcal{K}(\cdot,\mathbf{v}_{j})\big\rangle_{\mathcal{K}}
 =\sum_{i=1}^{n}\sum_{j=1}^{n}a_{i}b_{j}\mathcal{K}(\mathbf{v}_{i},\mathbf{v}_{j}).$$
  $\|f\|_{\mathcal{K}}^{2}:=\langle f,f\rangle_{\mathcal{K}},~\forall f\in\mathcal{H}_{\mathcal{K}}$. $\mathbf{f}=(f(\mathbf{v}_{1}),f(\mathbf{v}_{2}),\dots,f(\mathbf{v}_{n}))^{T}\in\mathbb{R}^{n}$. While $\mathbf{L}$ is a graph Laplacian matrix which reflects the intrinsic geometrical structure of $\mathcal{V}$ in the feature space. Specifically, the user set $\mathcal{V}$ can be modeled as a simple undirected graph $\mathcal{G}$, where nodes represent users and edges represent relationships between nodes. Each edge of $\mathcal{G}$ is assigned a weight to reflect the strength of the association between the connected nodes. Let $\mathbf{W}$ denote the weight matrix of all edges of $\mathcal{G}$, then $$\mathbf{L}:=\mathbf{D}-\mathbf{W},$$
  where $\mathbf{D}$ is the diagonal degree matrix with its diagonal elements $d_{i}=\sum_{j=1}^{n}W_{ij}$.

  It can be seen that the objective function of model \eqref{Ori_model} consists of three terms, namely: $\sum_{j=1}^{\ell}(f(\mathbf{v}_{j})-y_{j})^{2}$, $\lambda\|f\|_{\mathcal{K}}^{2}$ and $\gamma\mathbf{f}^{T}\mathbf{Lf}$, where the parameters $\lambda,\gamma>0$. Obviously, the first term $\sum_{j=1}^{\ell}(f(\mathbf{v}_{j})-y_{j})^{2}$ ensures that the function values of $f$ on the nodes with known ratings are close to the real values. The second term $\lambda\|f\|_{\mathcal{K}}^{2}$ characterizes the continuity of $f$ in the feature mapping space $\Phi(\mathcal{V})$, where $\Phi$ is a feature mapping that maps $\mathcal{V}$ into an implicit and easy-to-classify manifold \cite{2020Yang}, since
  \begin{align*}
    |f(\mathbf{v}_{i})-f(\mathbf{v}_{j})| &\leq\|f\|_{\mathcal{K}}\cdot\|\mathcal{K}(\cdot,\mathbf{v}_{i})-\mathcal{K}(\cdot,\mathbf{v}_{j})\|_{\mathcal{K}}  \\
     & =\|f\|_{\mathcal{K}}\cdot\|\Phi(\mathbf{v}_{i})-\Phi(\mathbf{v}_{j})\|_{\mathcal{K}}.
  \end{align*}
  While the last term $\gamma\mathbf{f}^{T}\mathbf{Lf}$ ensures that $f$ is smooth on graph $\mathcal{G}$, since
$$
\mathbf{f}^{T}\mathbf{Lf}=\frac{1}{2}\sum_{i,j=1}^{n}W_{ij}\Big(f(\mathbf{x}_{i})-f(\mathbf{x}_{j})\Big)^{2},
$$
where $W_{ij}$ is the $(i,j)$ element of $\mathbf{W}$ that reflects the affinity between users $i$ and $j$.

\subsection{Model solution}
The main goal is now to solve the prediction model $\eqref{Ori_model}$. First, according to \cite{2020Yang}, the minimizer of model $\eqref{Ori_model}$ admits an expansion
$$
 f^{*}(\mathbf{v})=\sum_{j=1}^{n}a_{j}\mathcal{K}(\mathbf{v}_{j},\mathbf{v}),~~\forall\mathbf{v}\in\mathcal{V},
$$
where $a_{j}\in\mathbb{R},~\forall j=1,2,...,n$. Let $\mathbf{a}=(a_{1},a_{2},...,a_{n})^{T}\in\mathbb{R}^{n}$, then solving model \eqref{Ori_model} is equivalent to solving the following linear equation:
\begin{equation}
  (\mathbf{K}_{\ell}\mathbf{K}_{\ell}^{T}+\lambda\mathbf{K}+\gamma\mathbf{KLK})\mathbf{a}=\mathbf{K}_{\ell}\mathbf{y}_{\ell}, \label{a_solution}
\end{equation}
where $\mathbf{y}_{\ell}=(y_{1},y_{2},...,y_{\ell})^{T}\in\mathbb{R}^{\ell}$, $\mathbf{K}\in\mathbb{R}^{n\times n}$ is the kernel gram matrix with $K_{ij}= \mathcal{K}(\mathbf{x}_{i},\mathbf{x}_{j})$ and $\mathbf{K}_{\ell}$ represents the sub-matrix consisting of the first $\ell$ columns of $\mathbf{K}$.

By solving $\mathbf{a}=(a_{1},a_{2},...,a_{n})^{T}\in\mathbb{R}^{n}$ based on \eqref{a_solution},
the value of $f^{*}$ at $\mathcal{V}$ can be obtained as $$\mathbf{f}^{*}=(f^{*}(\mathbf{v}_{1}),f^{*}(\mathbf{v}_{2}),\dots,f^{*}(\mathbf{v}_{n}))^{T}=\mathbf{K}\mathbf{a}.$$

Note that for each item, the prediction ratings given by all users can be obtained by solving $\eqref{a_solution}$. To distinguish this method from the subsequent approximation method and the proposed method, we call it  the original method (labeled $``\mathbf{Ori}"$). It is clear that $\mathbf{Ori}$ needs $O(n^{3})$ computations to achieve the prediction for one item by solving $\eqref{a_solution}$. Furthermore, since matrix $\mathbf{K}_{\ell}\mathbf{K}_{\ell}^{T}+\lambda\mathbf{K}+\gamma\mathbf{KLK}$ and vector $\mathbf{K}_{\ell}\mathbf{y}_{\ell}$ vary with the label samples,  a total of $O(n^{3}m)$ computations are required to predict all unknown labels of $m$ items by solving $\eqref{a_solution}$ $m$ times. Therefore, it is computationally expensive when $m$ and $n$ are both large.

To reduce the computational complexity, \cite{2020Yang} provided an approximate solution to the optimization $\eqref{a_solution}$ based on bandlimited reconstruction on graphs. Such a graph-based approximate method (labeled $``\mathbf{GBa}"$) treats the solution of \eqref{a_solution} approximately as a bandlimited signal on graph $\mathcal{G}$, thereby avoiding solving \eqref{a_solution} repeatedly. Specifically, $\mathbf{a}$ is written approximately as
$$\mathbf{a}\approx\mathbf{U}_{k_{b}}\mathbf{c},$$
where $k_{b}$ is the reconstructed bandwidth, $\mathbf{c}\in\mathbb{R}^{k_{b}}$ and $\mathbf{U}_{k_{b}}$ denotes the sub-matrix consisting of the first $k_{b}$ columns of the eigenvector matrix $\mathbf{U}\in\mathbb{R}^{n\times n}$ of $\mathbf{L}$. According to \cite{2020Yang}, $\mathbf{c}$ can be solved by the following linear equation:
\begin{equation}\label{c_solution}
   \Big(\mathbf{U}_{k_{b}}^{T}(\mathbf{K}_{\ell}\mathbf{K}_{\ell}^{T}+\lambda\mathbf{K}+\gamma\mathbf{KLK})\mathbf{U}_{k_{b}}\Big)\mathbf{c}
   =\mathbf{U}_{k_{b}}^{T}\mathbf{K}_{\ell}\mathbf{y}_{\ell}.
\end{equation}
Finally, $\mathbf{GBa}$ obtains an approximate solution of the original model \eqref{Ori_model} by solving \eqref{c_solution}, i.e., $\mathbf{f}^{*}\approx\mathbf{K}\mathbf{U}_{k_{b}}\mathbf{c}$.

Compared to $\mathbf{Ori}$, $\mathbf{GBa}$ only needs to solve a $k_{b}\times k_{b}$ linear equation to predict each item, thereby making it more efficient when $k_{b}\ll n$. Since the eigendecomposition of the graph Laplacian $\mathbf{L}$ only needs to be computed once, $\mathbf{GBa}$ reduces the computational cost from $O(n^{3}m)$ to $O(n^{2}m(k_{b}+\ell))$ to predict the last $m-1$ items.

However, $\mathbf{GBa}$ only obtains an approximate solution of the original model. Although this approximate solution will approach the optimal solution as $k_{b}$ increases, but at the same time it requires more computational cost. To accurately solve the original model and further reduce the computational cost, we propose an equivalent prediction model of the original model $\eqref{Ori_model}$ in the next section. Such an equivalent prediction model allows us to find a solution of model $\eqref{Ori_model}$ in a low-dimensional subspace and thus needs much less computational cost.

\section{An equivalent prediction model}\label{Sec_modifiedmodel}
To propose the equivalent prediction model, let
\begin{equation}\label{New_inner_product}
\langle f,g\rangle_{\mathcal{R}}:=\lambda\langle f,g\rangle_{\mathcal{K}}+\gamma\mathbf{f}^{T}\mathbf{L}\mathbf{g},~~\forall f,g\in\mathcal{H}_{\mathcal{K}},
\end{equation}
 where $\mathbf{f}=(f(\mathbf{v}_{1}),f(\mathbf{v}_{2}),...,f(\mathbf{v}_{n}))^{T},~\mathbf{g}=(g(\mathbf{v}_{1}),g(\mathbf{v}_{2}),...,g(\mathbf{v}_{n}))^{T}\in\mathbb{R}^{n}$. Then, it can be easily seen that $\langle\cdot,\cdot\rangle_{\mathcal{R}}$ redefines an inner product on $\mathcal{H}_{\mathcal{K}}$ and satisfies
 \begin{equation}\label{R_norm}
 \|f\|_{\mathcal{R}}^{2}:=\lambda\|f\|^{2}_{\mathcal{K}}+\gamma\mathbf{f}^{T}\mathbf{L}\mathbf{g}\geq\lambda\|f\|_{\mathcal{K}}^{2},~~\forall f\in\mathcal{H}_{\mathcal{K}}.
 \end{equation}
It thus means that after reequipping with inner product $\langle \cdot,\cdot\rangle_{\mathcal{R}}$, $\mathcal{H}_{\mathcal{K}}$ constitutes a new Hilbert space, and we denote it by $\mathcal{H}_{\mathcal{R}}$.

\subsection{The proposed model and its solution}
Obviously, $\mathcal{H}_{\mathcal{K}}$ and $\mathcal{H}_{\mathcal{R}}$ are two Hilbert spaces with the same vector space but equipped with different inner products. Therefore, the original prediction model \eqref{Ori_model} is equivalent to the following prediction model:
\begin{equation}\label{Equivalent_model}
\argmin_{f\in\mathcal{H}_{\mathcal{R}}}\sum_{j=1}^{\ell}(f(\mathbf{v}_{j})-y_{j})^{2}+\|f\|^{2}_{\mathcal{R}}.
\end{equation}
It shows that solving the original model \eqref{Ori_model} is equivalent to solving the model \eqref{Equivalent_model}.

To solve model \eqref{Equivalent_model}, we first prove the following theorem.
\begin{theorem}
  $\mathcal{H}_{\mathcal{R}}$ is a reproducing kernel Hilbert space (RKHS). \label{New_RKHS}
\end{theorem}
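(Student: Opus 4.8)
The plan is to verify the two defining properties of a reproducing kernel Hilbert space in turn: that $\mathcal{H}_{\mathcal{R}}$ is a genuine Hilbert space of functions, and that every point-evaluation functional on it is bounded. Once boundedness is in hand, the reproducing kernel is produced directly by the Riesz representation theorem.

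First I would confirm that $\langle\cdot,\cdot\rangle_{\mathcal{R}}$ defined in \eqref{New_inner_product} really is an inner product. Bilinearity and symmetry are inherited termwise from $\langle\cdot,\cdot\rangle_{\mathcal{K}}$ and from the symmetry of $\mathbf{L}$. Positive-definiteness is the only point with content: since $\mathbf{L}$ is a graph Laplacian it is positive semidefinite, so $\gamma\mathbf{f}^{T}\mathbf{L}\mathbf{f}\ge 0$, and with $\lambda>0$ inequality \eqref{R_norm} gives $\|f\|_{\mathcal{R}}^{2}\ge\lambda\|f\|_{\mathcal{K}}^{2}\ge 0$, where equality forces $\|f\|_{\mathcal{K}}=0$, i.e. $f=0$. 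Because $\mathcal{H}_{\mathcal{R}}$ shares the finite-dimensional underlying vector space of $\mathcal{H}_{\mathcal{K}}$ (spanned by the $n$ functions $\mathcal{K}(\cdot,\mathbf{v}_{j})$), it is automatically complete, hence a Hilbert space.

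The crux is to show that for each $\mathbf{v}\in\mathcal{V}$ the evaluation functional $E_{\mathbf{v}}:f\mapsto f(\mathbf{v})$ is continuous on $\mathcal{H}_{\mathcal{R}}$. Here I would transfer boundedness from the original space using the norm domination recorded in \eqref{R_norm}, which rearranges to $\|f\|_{\mathcal{K}}\le\lambda^{-1/2}\|f\|_{\mathcal{R}}$. Since $\mathcal{H}_{\mathcal{K}}$ is an RKHS with reproducing kernel $\mathcal{K}$, the reproducing property together with Cauchy--Schwarz yields
\[
|f(\mathbf{v})|=\big|\langle f,\mathcal{K}(\cdot,\mathbf{v})\rangle_{\mathcal{K}}\big|\le\|\mathcal{K}(\cdot,\mathbf{v})\|_{\mathcal{K}}\,\|f\|_{\mathcal{K}}\le\lambda^{-1/2}\,\|\mathcal{K}(\cdot,\mathbf{v})\|_{\mathcal{K}}\,\|f\|_{\mathcal{R}},
\]
so $E_{\mathbf{v}}$ is a bounded linear functional on $\mathcal{H}_{\mathcal{R}}$.

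Finally, applying the Riesz representation theorem in $\mathcal{H}_{\mathcal{R}}$ produces, for each $\mathbf{v}$, a unique $\mathcal{R}(\cdot,\mathbf{v})\in\mathcal{H}_{\mathcal{R}}$ with $f(\mathbf{v})=\langle f,\mathcal{R}(\cdot,\mathbf{v})\rangle_{\mathcal{R}}$ for all $f$; setting $\mathcal{R}(\mathbf{u},\mathbf{v}):=\langle\mathcal{R}(\cdot,\mathbf{u}),\mathcal{R}(\cdot,\mathbf{v})\rangle_{\mathcal{R}}$ then defines a symmetric function enjoying the reproducing property, which certifies $\mathcal{H}_{\mathcal{R}}$ as an RKHS. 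I do not expect a serious obstacle: finite-dimensionality makes completeness and the boundedness of $E_{\mathbf{v}}$ essentially automatic, so the only genuine ingredient is the norm comparison \eqref{R_norm}, which is precisely what makes the boundedness estimate transparent and is the step I would emphasize.
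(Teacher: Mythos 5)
Your proof is correct and follows essentially the same route as the paper: the decisive step in both is the norm domination $\|f\|_{\mathcal{K}}\le\lambda^{-1/2}\|f\|_{\mathcal{R}}$ from \eqref{R_norm}, used to transfer boundedness of the evaluation functionals from $\mathcal{H}_{\mathcal{K}}$ to $\mathcal{H}_{\mathcal{R}}$. You additionally verify the inner-product axioms, completeness, and the Riesz construction of the kernel, which the paper leaves implicit, but these are routine and do not change the argument.
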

\begin{proof}
  According to the results in \cite{1950TheoryOfRK}, since $\mathcal{H}_{\mathcal{K}}$ is an RKHS, for any $\mathbf{v}\in\mathcal{V}$, there exists a constant $M_{\mathbf{v}}>0$ such that
  $$|f(\mathbf{v})|\leq M_{\mathbf{v}}\|f\|_{\mathcal{K}},~~\forall f\in\mathcal{H}_{\mathcal{K}}.$$
  Furthermore, by \eqref{R_norm}, we have
  $$|f(\mathbf{v})|\leq M_{\mathbf{v}}\|f\|_{\mathcal{K}}\leq M_{\mathbf{v}}\lambda^{-1/2}\|f\|_{\mathcal{R}},~~\forall f\in\mathcal{H}_{\mathcal{R}},$$
  where means that for any $\mathbf{v}\in\mathcal{V}$, the point functional $$\delta_{\mathbf{v}}:f\rightarrow f(\mathbf{v}),~~\forall f\in\mathcal{H}_{\mathcal{R}}$$ is a continuous linear functional on $\mathcal{H}_{\mathcal{R}}$. Therefore, $\mathcal{H}_{\mathcal{R}}$ is also an RKHS, which completes the proof.
\end{proof}

The above theorem shows that $\mathcal{H}_{\mathcal{R}}$ is an RKHS, and thus we can next apply the representer theorem of RKHS to solve the model \eqref{Equivalent_model}. Specifically, let us denote by $\mathcal{R}: \mathcal{V}\times\mathcal{V}\rightarrow\mathbb{R}$ the kernel function of $\mathcal{H}_{\mathcal{R}}$ and $\mathbf{R}\in\mathbb{R}^{n\times n}$ the kernel gram matrix with $R_{ij}= \mathcal{R}(\mathbf{v}_{i},\mathbf{v}_{j})$. Then, according to the representer theorem \cite{2006Manifold,2014ProgressiveImage}, the minimizer of model $\eqref{Equivalent_model}$ admits an expansion

 \begin{equation}\label{Solution_R}
 f^{*}(\mathbf{v})=\sum_{j=1}^{\ell}d_{j}\mathcal{R}(\mathbf{v}_{j},\mathbf{v}),~~\forall \mathbf{v}\in\mathcal{V},
 \end{equation}

where $d_{j}\in\mathbb{R}$ for $j=1,2,...,\ell$. Obviously, for any $f$ satisfying form \eqref{Solution_R}, we have

$$\sum_{j=1}^{\ell}(f(\mathbf{v}_{j})-y_{j})^{2}=\|\mathbf{y}_{\ell}-\mathbf{R}_{\ell,\ell}\mathbf{d}\|_{2}^{2},$$
and
\begin{align*}
  \|f\|^{2}_{\mathcal{R}} & =\langle f,f\rangle_{\mathcal{R}}
  =\Big\langle\sum_{j=1}^{\ell}d_{j}\mathcal{R}(\mathbf{v}_{j},\mathbf{v}),\sum_{j=1}^{\ell}d_{j}\mathcal{R}(\mathbf{v}_{j},\mathbf{v}) \Big\rangle_{\mathcal{R}} \\
                    &  =\sum_{i=1}^{\ell}\sum_{j=1}^{\ell}d_{i}d_{j}\mathcal{R}(\mathbf{v}_{i},\mathbf{v}_{j})=\mathbf{d}^{T}\mathbf{R}_{\ell,\ell}\mathbf{d},
\end{align*}
where $\mathbf{d}=(d_{1},d_{2},...,d_{\ell})^{T}\in\mathbb{R}^{\ell}$ and $\mathbf{R}_{\ell,\ell}$ is the sub-matrix consisting of the first $\ell$ rows and the first $\ell$ columns of $\mathbf{R}$.

Therefore, solving model \eqref{Equivalent_model} is equivalent to solving the following optimization problem:
\begin{equation}
\argmin_{\mathbf{d}\in\mathbb{R}^{\ell}}\|\mathbf{y}_{\ell}-\mathbf{R}_{\ell,\ell}\mathbf{d}\|_{2}^{2}+\mathbf{d}^{T}\mathbf{R}_{\ell,\ell}\mathbf{d}, \label{ModifiedPro}
\end{equation}
Obviously, the minimizer of \eqref{ModifiedPro} satisfies that $$(\mathbf{R}^{T}_{\ell,\ell}\mathbf{R}_{\ell,\ell}+\mathbf{R}_{\ell,\ell})\mathbf{d}-\mathbf{R}_{\ell,\ell}^{T}\mathbf{y}_{\ell}=\mathbf{0},$$
which has a solution
\begin{equation}
\mathbf{d}=(\mathbf{I}+\mathbf{R}_{\ell,\ell})^{-1}\mathbf{y}_{\ell}, \label{d_solution}
\end{equation}
where $\mathbf{I}$ is the identity matrix. Once $\mathbf{d}$ is solved, then the value of $f^{*}$ at $\mathcal{V}$ is $$\mathbf{f}^{*}=(f^{*}(\mathbf{v}_{1}),f^{*}(\mathbf{v}_{2}),\dots,f^{*}(\mathbf{v}_{n}))^{T}=\mathbf{R}_{\ell}\mathbf{d},$$ where $\mathbf{R}_{\ell}$ is the sub-matrix consisting of the first $\ell$ columns of $\mathbf{R}$.

At this point, we have obtained a solution of the original prediction model \eqref{Ori_model} by solving its equivalent model \eqref{Equivalent_model} based on a new RKHS $\mathcal{H}_{\mathcal{R}}$. Now, there are some important comments to make about two RKHSs $\mathcal{H}_{\mathcal{K}}$ and $\mathcal{H}_{\mathcal{R}}$.

\begin{itemize}
  \item First, as mentioned before, $\mathcal{H}_{\mathcal{K}}$ and $\mathcal{H}_{\mathcal{R}}$ have the same elements, but as RKHSs, they are equipped with different kernel functions $\mathcal{K}$ and $\mathcal{R}$ respectively.
  \item Second, there is an important relationship between the two kernel gram matrixes $\mathbf{K}$ and $\mathbf{R}$, which is shown in the following theorem.
\end{itemize}

\begin{theorem}\label{KandR}
The kernel gram matrixes $\mathbf{K}$ and $\mathbf{R}$ have the following relationships:
\begin{equation}
\mathbf{R}=\mathbf{KT}, \label{KtoR}
\end{equation}
\end{theorem}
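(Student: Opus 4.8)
My plan is to work entirely in terms of coefficient vectors, which turns both abstract inner products into concrete quadratic forms and reduces the identity to a matrix computation. Since $\mathcal{H}_{\mathcal{R}}$ and $\mathcal{H}_{\mathcal{K}}$ share the same underlying vector space, every $f$ can be written as $f=\sum_{j=1}^{n}a_{j}\mathcal{K}(\cdot,\mathbf{v}_{j})$ with coefficient vector $\mathbf{a}=(a_{1},\dots,a_{n})^{T}$, and its value vector is $\mathbf{f}=\mathbf{K}\mathbf{a}$. For $f,g$ with coefficient vectors $\mathbf{a},\mathbf{b}$, the defining formula of $\mathcal{K}$ gives $\langle f,g\rangle_{\mathcal{K}}=\mathbf{a}^{T}\mathbf{K}\mathbf{b}$, and substituting $\mathbf{f}=\mathbf{K}\mathbf{a}$, $\mathbf{g}=\mathbf{K}\mathbf{b}$ into \eqref{New_inner_product} yields $\langle f,g\rangle_{\mathcal{R}}=\lambda\mathbf{a}^{T}\mathbf{K}\mathbf{b}+\gamma(\mathbf{K}\mathbf{a})^{T}\mathbf{L}(\mathbf{K}\mathbf{b})=\mathbf{a}^{T}\mathbf{M}\mathbf{b}$, where I set $\mathbf{M}:=\lambda\mathbf{K}+\gamma\mathbf{K}\mathbf{L}\mathbf{K}$. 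Thus $\mathbf{M}$ is the Gram matrix of the $\mathcal{R}$-inner product expressed in the $\mathcal{K}$-generating system.

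Next I would pin down the columns of $\mathbf{R}$ using the reproducing property that Theorem \ref{New_RKHS} guarantees. Writing $\mathcal{R}(\cdot,\mathbf{v}_{i})\in\mathcal{H}_{\mathcal{R}}=\mathcal{H}_{\mathcal{K}}$ with coefficient vector $\mathbf{c}^{(i)}$, the reproducing identity $f(\mathbf{v}_{i})=\langle f,\mathcal{R}(\cdot,\mathbf{v}_{i})\rangle_{\mathcal{R}}=\mathbf{a}^{T}\mathbf{M}\mathbf{c}^{(i)}$ must hold for every $f$, i.e.\ for every $\mathbf{a}$. On the other hand $f(\mathbf{v}_{i})=(\mathbf{K}\mathbf{a})_{i}=\mathbf{a}^{T}\mathbf{K}\mathbf{e}_{i}$ by symmetry of $\mathbf{K}$, where $\mathbf{e}_{i}$ is the $i$-th standard basis vector. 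Matching these for all $\mathbf{a}$ gives the linear system $\mathbf{M}\mathbf{c}^{(i)}=\mathbf{K}\mathbf{e}_{i}$. Evaluating the representer at $\mathbf{v}_{j}$ shows the $i$-th column of $\mathbf{R}$ equals $\mathbf{K}\mathbf{c}^{(i)}$, so collecting the $\mathbf{c}^{(i)}$ into a matrix $\mathbf{C}$ with $\mathbf{M}\mathbf{C}=\mathbf{K}$ yields
\[
\mathbf{R}=\mathbf{K}\mathbf{C}=\mathbf{K}\mathbf{M}^{-1}\mathbf{K}.
\]

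Finally I would factor $\mathbf{M}=\mathbf{K}\bigl(\lambda\mathbf{I}+\gamma\mathbf{L}\mathbf{K}\bigr)$ and simplify, using $\mathbf{M}^{-1}\mathbf{K}=\bigl(\lambda\mathbf{I}+\gamma\mathbf{L}\mathbf{K}\bigr)^{-1}\mathbf{K}^{-1}\mathbf{K}=\bigl(\lambda\mathbf{I}+\gamma\mathbf{L}\mathbf{K}\bigr)^{-1}$, to obtain $\mathbf{R}=\mathbf{K}\bigl(\lambda\mathbf{I}+\gamma\mathbf{L}\mathbf{K}\bigr)^{-1}$, which is exactly \eqref{KtoR} with $\mathbf{T}=\bigl(\lambda\mathbf{I}+\gamma\mathbf{L}\mathbf{K}\bigr)^{-1}$. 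The one genuine obstacle is justifying the two inversions, i.e.\ that $\mathbf{K}$ and $\mathbf{M}$ are nonsingular; this is where I expect the argument to require care. Invertibility of $\mathbf{M}$ follows cleanly from \eqref{R_norm}: for $\mathbf{a}\neq\mathbf{0}$ one has $\mathbf{a}^{T}\mathbf{M}\mathbf{a}=\|f\|_{\mathcal{R}}^{2}\geq\lambda\|f\|_{\mathcal{K}}^{2}=\lambda\,\mathbf{a}^{T}\mathbf{K}\mathbf{a}>0$ once $\mathbf{K}$ is positive definite, so the whole construction rests on the standing assumption that the kernel Gram matrix $\mathbf{K}$ is strictly positive definite (guaranteeing both uniqueness of the coefficient vectors $\mathbf{c}^{(i)}$ and the factorization of $\mathbf{M}$). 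If only positive semidefiniteness is available, I would instead read $\mathbf{M}\mathbf{C}=\mathbf{K}$ and $\mathbf{R}=\mathbf{K}\mathbf{C}$ as defining $\mathbf{T}=\mathbf{C}$ implicitly, but under the positive-definiteness hypothesis the explicit closed form above is immediate.
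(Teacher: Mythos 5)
Your proof is correct, and it rests on the same underlying mechanism as the paper's: the reproducing property of $\mathcal{H}_{\mathcal{R}}$ pins down the expansion coefficients of $\mathcal{R}(\cdot,\mathbf{v}_{i})$ in the generating system $\{\mathcal{K}(\cdot,\mathbf{v}_{s})\}_{s=1}^{n}$, and those coefficients are exactly the $i$-th column of $\mathbf{T}$. The difference is the direction of travel. The paper writes down the candidate $\sum_{s}T_{si}\mathcal{K}(\cdot,\mathbf{v}_{s})$ and \emph{verifies} that it reproduces point evaluations under $\langle\cdot,\cdot\rangle_{\mathcal{R}}$; this needs only that $\lambda\mathbf{I}+\gamma\mathbf{L}\mathbf{K}$ is invertible (which the paper justifies from positive semi-definiteness of $\mathbf{L}$ and $\mathbf{K}$) and never inverts $\mathbf{K}$ or the Gram matrix of the new inner product. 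You instead \emph{solve} for the coefficients, passing through $\mathbf{M}=\lambda\mathbf{K}+\gamma\mathbf{K}\mathbf{L}\mathbf{K}$ and the intermediate identity $\mathbf{R}=\mathbf{K}\mathbf{M}^{-1}\mathbf{K}$ before simplifying to $\mathbf{K}\mathbf{T}$. Your route is more informative — it explains where $\mathbf{T}$ comes from rather than producing it out of thin air, and the symmetric form $\mathbf{K}\mathbf{M}^{-1}\mathbf{K}$ makes the symmetry of $\mathbf{R}$ manifest — but it costs you the extra standing hypothesis that $\mathbf{K}$ is strictly positive definite (both for $\mathbf{M}^{-1}$ via your coercivity bound from \eqref{R_norm} and for the cancellation $\mathbf{K}^{-1}\mathbf{K}=\mathbf{I}$). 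You flag this honestly, and it is harmless for the Gaussian kernels actually used in the paper, but note that the paper's verification-style argument dispenses with it.
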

where $\mathbf{T}=(\lambda\mathbf{I}+\gamma\mathbf{LK})^{-1}$, where $\mathbf{I}$ is the identity matrix.
\begin{proof}
First, for $\forall f\in\mathcal{H}_{\mathcal{R}}$, we have $$\langle f,\mathcal{R}(\mathbf{v}_{i},\cdot)\rangle_{\mathcal{R}}=f(\mathbf{v}_{i}),~~\forall\mathbf{v}_{i}\in\mathcal{V}.$$
Let $\boldsymbol{\delta}_{i}$ denote the unit vector that all element are 0 except the $i$-th one, which is 1. Then,
   \begin{align*}
   &\langle f,\mathcal{R}(\mathbf{v}_{i},\cdot)-\sum_{s=1}^{n}\boldsymbol{\delta}_{s}^{T}\mathbf{T}\boldsymbol{\delta}_{i}\mathcal{K}(\mathbf{v}_{s},\cdot)\rangle_{\mathcal{R}}\\
    =&f(\mathbf{v}_{i})-\Big(\langle f,\sum_{s=1}^{n}\boldsymbol{\delta}_{s}^{T}\mathbf{T}\boldsymbol{\delta}_{i}\mathcal{K}(\mathbf{v}_{s},\cdot)\rangle_{\mathcal{R}}\Big)\\
    =&f(\mathbf{v}_{i})-\Big(\lambda \langle f,\sum_{s=1}^{n}\boldsymbol{\delta}_{s}^{T}\mathbf{T}\boldsymbol{\delta}_{i}\mathcal{K}(\mathbf{v}_{s},\cdot)\rangle_{\mathcal{K}}
    +\gamma\mathbf{f}^{T}\mathbf{LKT}\boldsymbol{\delta}_{i}\Big)\\
    =&f(\mathbf{v}_{i})-\Big(\lambda \sum_{s=1}^{n}\boldsymbol{\delta}_{s}^{T}\mathbf{T}\boldsymbol{\delta}_{i}f(\mathbf{v}_{s})+\gamma\mathbf{f}^{T}\mathbf{LKT}\boldsymbol{\delta}_{i}\Big)\\
    =&f(\mathbf{v}_{i})-\Big(\mathbf{f}^{T}\mathbf{T}\boldsymbol{\delta}_{i}+\gamma\mathbf{f}^{T}\mathbf{LKT}\boldsymbol{\delta}_{i}\Big)\\
    =&f(\mathbf{v}_{i})-\mathbf{f}^{T}(\lambda\mathbf{I}+\gamma\mathbf{LK})\mathbf{T}\boldsymbol{\delta}_{i}=f(\mathbf{v}_{i})-f(\mathbf{v}_{i})=\mathbf{0},\\
   \end{align*}
 which implies that
 $$\mathcal{R}(\mathbf{v}_{i},\cdot)=\sum_{s=1}^{n}\boldsymbol{\delta}_{s}^{T}\mathbf{T}\boldsymbol{\delta}_{i}\mathcal{K}(\mathbf{v}_{s},\cdot),~~\forall\mathbf{v}_{i}\in\mathcal{V}.$$
  Finally, since the kernel gram matrixes $\mathbf{R}$ and $\mathbf{K}$ are both symmetric,
 \begin{align*}
   \mathbf{R}(j,i)=\mathbf{R}(i,j)&=\mathcal{R}(\mathbf{v}_{i},\mathbf{v}_{j}) \\
   & =\sum_{s=1}^{n}\boldsymbol{\delta}_{s}^{T}\mathbf{T}\boldsymbol{\delta}_{i}\mathcal{K}(\mathbf{v}_{s},\mathbf{v}_{j}) \\
                                   &= \sum_{s=1}^{n}\mathbf{T}(s,i)\mathbf{K}(s,j)=\sum_{s=1}^{n}\mathbf{K}(j,s)\mathbf{T}(s,i)\\
                                   &=(\mathbf{KT})(j,i),
 \end{align*}
which implies that $\mathbf{R}=\mathbf{KT}$ and thus completes the proof.
 \end{proof}
The above theorem intuitively states the relationship between $\mathbf{K}$ and $\mathbf{R}$ in which the matrix $\mathbf{T}=\lambda\mathbf{I}+\gamma\mathbf{LK}$ plays an important role. Note that $\mathbf{T}$ is always invertible since $\mathbf{L}$ and $\mathbf{K}$ are both positive semi-definite$\footnote{$\mathbf{LK}$ has non-negative eigenvalue since $\mathbf{L}$ and $\mathbf{K}$ are both positive semi-definite, which impies that $\mathbf{T}$ is invertible.}$.  Most importantly, formular \eqref{KtoR} provides the following two important results:
\begin{itemize}
  \item $\mathbf{R}$ can be computed by \eqref{KtoR} using $\mathbf{K}$ and $\mathbf{T}$.
  \item If $\mathbf{d}$ is solved by $\eqref{d_solution}$, $\mathbf{a}=\mathbf{T}_{\ell}\mathbf{d}$ is then a solution of the original prediction model $\eqref{Ori_model}$.

\end{itemize}



\subsection{Computational complexity}\label{complexity}
Based on Theorem \ref{KandR}, the detailed steps to solve the proposed equivalent model $\eqref{Equivalent_model}$ are shown in Algorithm $\ref{Alg_solution}$. Now, we provide a computational complexity analysis of Algorithm $\ref{Alg_solution}$.

First, step 1 requires $O(n^{3})$ computations to compute $\mathbf{R}=\mathbf{KT}$ by $\mathbf{K}$ and $\mathbf{T}=(\lambda\mathbf{I}+\gamma\mathbf{LK})^{-1}$ \cite{1996Matrix}. Second, solving $\mathbf{d}$ by $\eqref{d_solution}$ needs $O(\ell^{3})$ computations \cite{1996Matrix}. Finally,  $O(n\ell)$ computations are required to obtain $\mathbf{f}=\mathbf{R}_{\ell}\mathbf{d}$. To predict the complete score matrix $\mathbf{S}\in\mathbb{R}^{n\times m}$ with $n$ users and $m$ items, $\mathbf{R}$ only needs to be computed once since $\mathbf{L}$ and $\mathbf{K}$ are both fixed. Therefore, the proposed method requires $O(n^{3})$ computations to predict the first item, but it requires only $O(nm\ell+m\ell^{3})$ computations to predict the last $m-1$ items.

Now, there are some important comments to make about the three methods, namely: $\mathbf{Ori}$ given by solving $\eqref{a_solution}$, $\mathbf{GBa}$ proposed by \cite{2020Yang} and the proposed method given by Algorithm $\ref{Alg_solution}$.
\begin{itemize}
  \item First of all,  the three methods all provide a solution of the original prediction model $\eqref{Ori_model}$. However, $\mathbf{GBa}$ just provides an approximate solution, while both $\mathbf{Ori}$ and the proposed method find an optimal solution.
  \item Secondly, $\mathbf{Ori}$ searches the solution in a high-dimensional subspace when $n$ is large, which makes it require a high computational cost to predict all the items. To reduce the computational cost, $\mathbf{GBa}$ provides an approximate solution by solving $\eqref{a_solution}$ based on the bandlimited assumption, which makes it reduce the computational cost from $O(n^{3}m)$ to $O(n^{2}m(k_{b}+\ell))$ to predict the last $m-1$ items compared to $\mathbf{Ori}$.
  \item Finally, using the proposed equivalent prediction model, the proposed method searches the solution in a low-dimensional subspace ($\ell\ll n$). Since the kernel gram matrix $\mathbf{R}$ needs to be computed once, it further reduces the amount of computation from $O(n^{2}m(k_{b}+\ell))$ to $O(nm\ell+m\ell^{3})$ to predict the last $m-1$ items compared to $\mathbf{GBa}$. It thus means that the proposed method makes a qualitative improvement in terms of computational cost. Most importantly, unlike $\mathbf{GBa}$, the proposed method obtains an optimal solution of the original model \eqref{Ori_model}.
\end{itemize}

\begin{algorithm}[t]
\caption{Algorithm for solving the equivalent model}\label{Alg_solution}
\begin{algorithmic}[1]
\Require
$n$ users $\mathcal{V}=\{\mathbf{v}_{1},\mathbf{v}_{2},...,\mathbf{v}_{n}\}$ and $\ell$ labeled samples $X_{\ell}=\{(\mathbf{v}_{1},y_{1}),(\mathbf{v}_{2},y_{2}),...,(\mathbf{v}_{\ell},y_{\ell})\}$.
\Ensure
 \State Obtain $\mathbf{R}=\mathbf{KT}$ through $\mathbf{T}$;
\State Solve $\mathbf{d}=(d_{1},d_{2},...,d_{\ell})$ by $\eqref{d_solution}$;
\State \Return $f^{*}=\sum_{j=1}^{\ell}d_{j}\mathcal{R}(\mathbf{v}_{j},\cdot)$ and $\mathbf{f}^{*}=\mathbf{R}_{\ell}\mathbf{d}$;
\end{algorithmic}
\end{algorithm}

Table $\ref{Com_pre_model}$ shows the main comparisons of  $\mathbf{Ori}$, $\mathbf{GBa}$ and the proposed method.

 \begin{table}[htb]
\renewcommand{\arraystretch}{1.5}
	\centering
\footnotesize
	\begin{tabular}{c|c|c|c|c}
    \hline
    \multirow{2}{*}{Methods} & \multirow{2}{*}{Model}
    & \multirow{2}{*}{Solution form}  & \multicolumn{2}{c}{\multirow{1}{*}{Computational Complexity}}\\
    \cline{4-5}
    & &  & the first item & the last $m-1$ items\\
    \hline
   $\mathbf{Ori}$ & \multirow{2}{*}{model \eqref{Ori_model}} 
   & $\mathbf{f}^{*}=\mathbf{K}\mathbf{a}$  & $O(n^{3})$ & $O(n^{3}m)$\\
   \cline{1-1}\cline{3-5}
   $\mathbf{GBa}$ & & $\mathbf{f}^{*}\approx\mathbf{K}\mathbf{U}_{k_{b}}\mathbf{c}$ & $O(n^{3})$ & $O\Big(n^{2}m(k_{b}+\ell)\Big)$ \\
   \cline{1-5}
   $\mathbf{Prop.}$ & model \eqref{Equivalent_model} & $\mathbf{f}^{*}=\mathbf{R}_{\ell}\mathbf{d}$  & $O(n^{3})$ & $O\Big(nm\ell+m\ell^{3}\Big)$\\
   \hline
	\end{tabular}
\caption{Comparison of the three methods in the prediction model with $n$ users and $m$ items.}
 \label{Com_pre_model}
\end{table}

\begin{figure*}[htb]
  \centering
    \subfigure[``two\_moons" point]{	
		\includegraphics[width=0.28\linewidth]{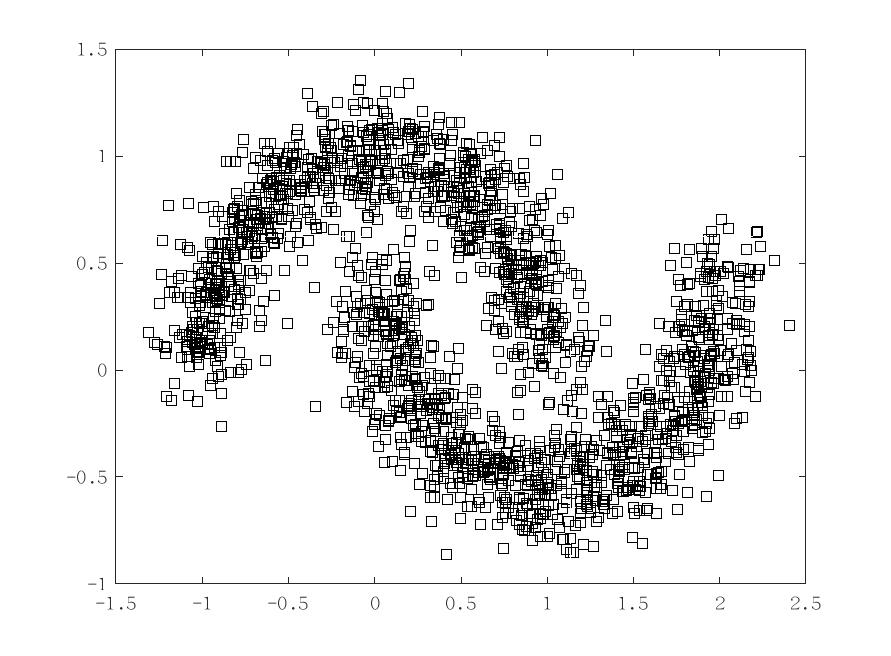}}\hspace{0.5cm}
    \subfigure[Labels of data on the graph]{	
		\includegraphics[width=0.28\linewidth]{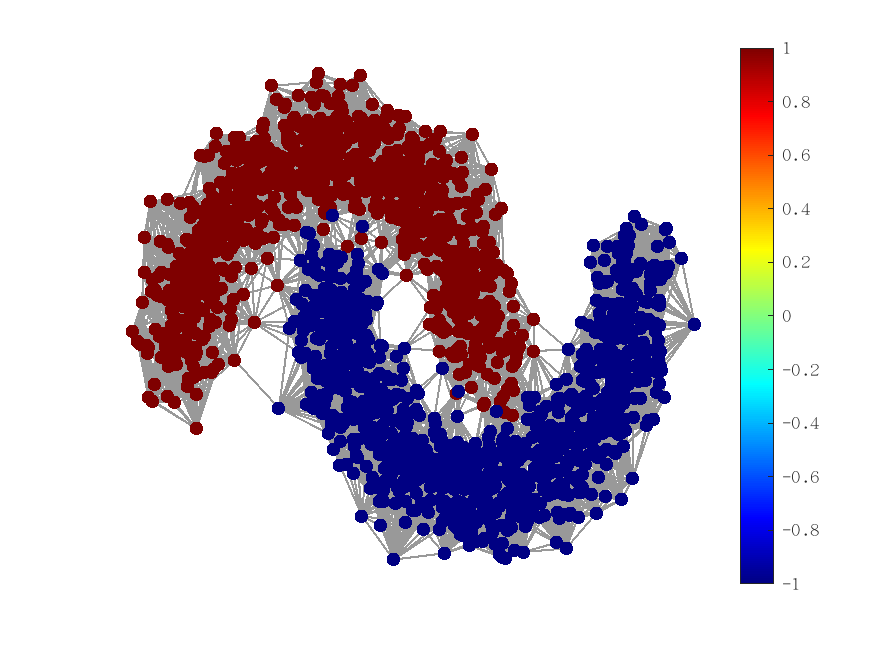}}\hspace{0.5cm}
    \subfigure[$\mathbf{Prop.}$]{	
		\includegraphics[width=0.28\linewidth]{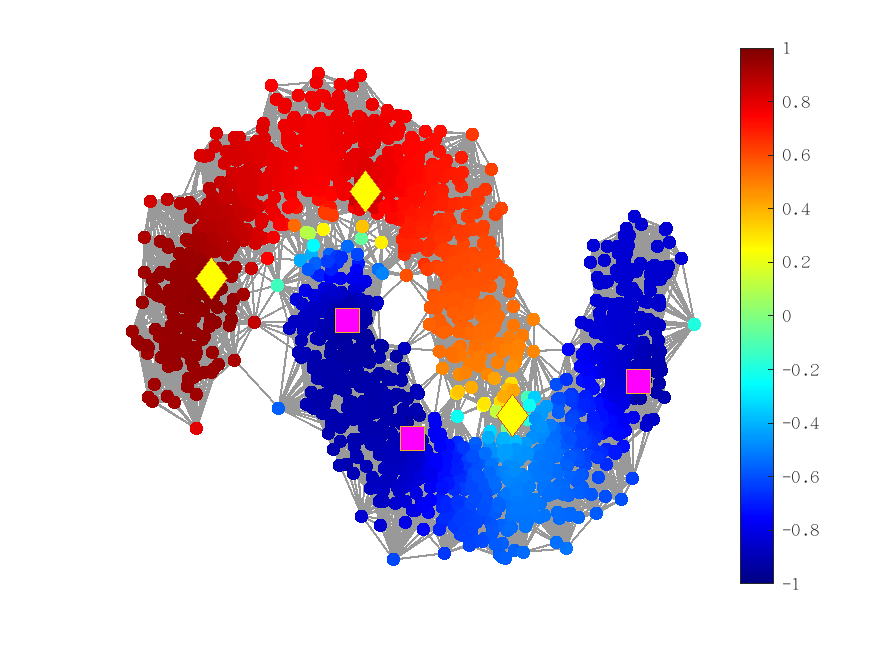}}

    \subfigure[$\mathbf{GBa10}$]{		
		\includegraphics[width=0.28\linewidth]{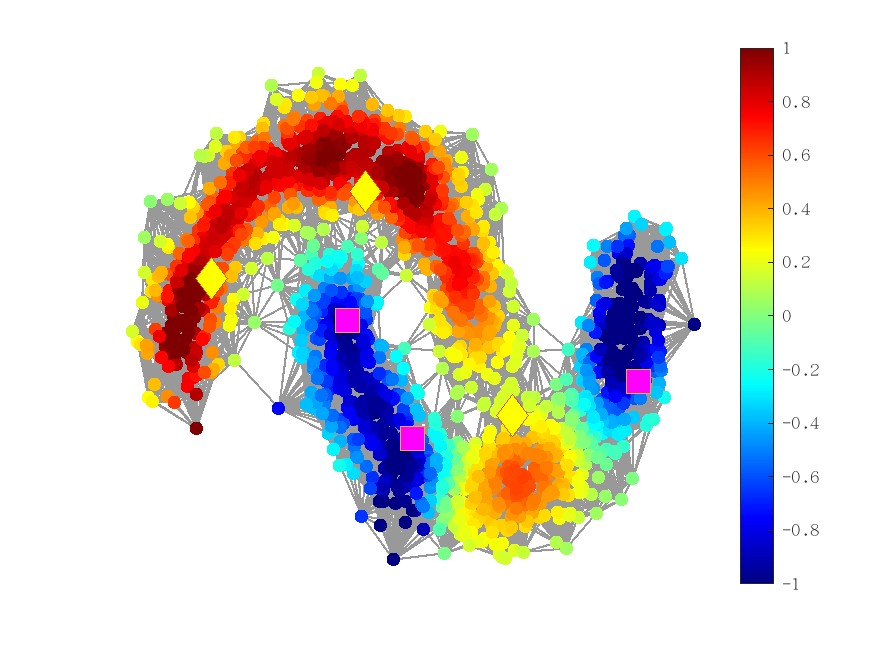}}\hspace{0.5cm}
    \subfigure[$\mathbf{GBa20}$]{		
		\includegraphics[width=0.28\linewidth]{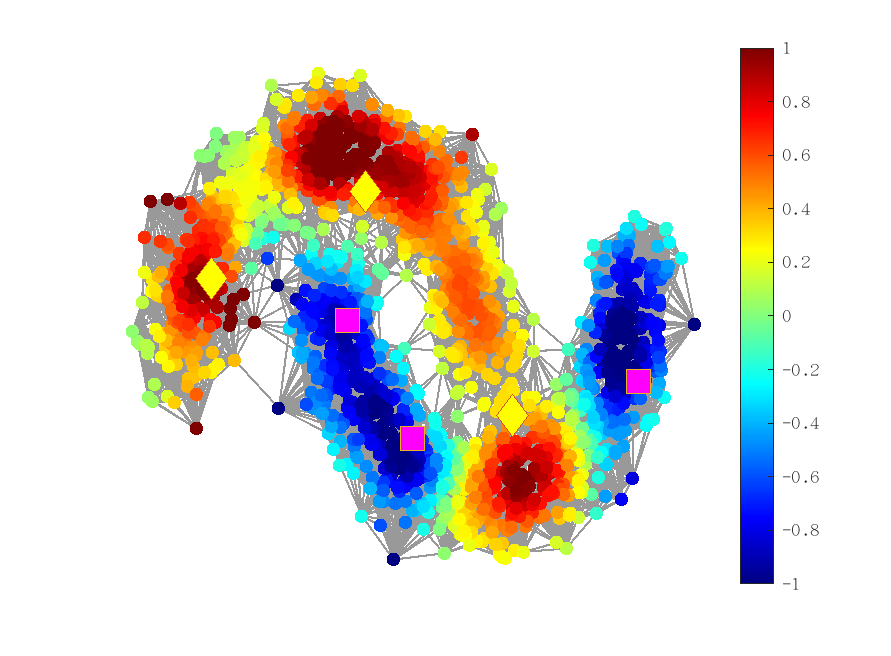}}\hspace{0.5cm}
    \subfigure[$\mathbf{GBa50}$]{		
		\includegraphics[width=0.28\linewidth]{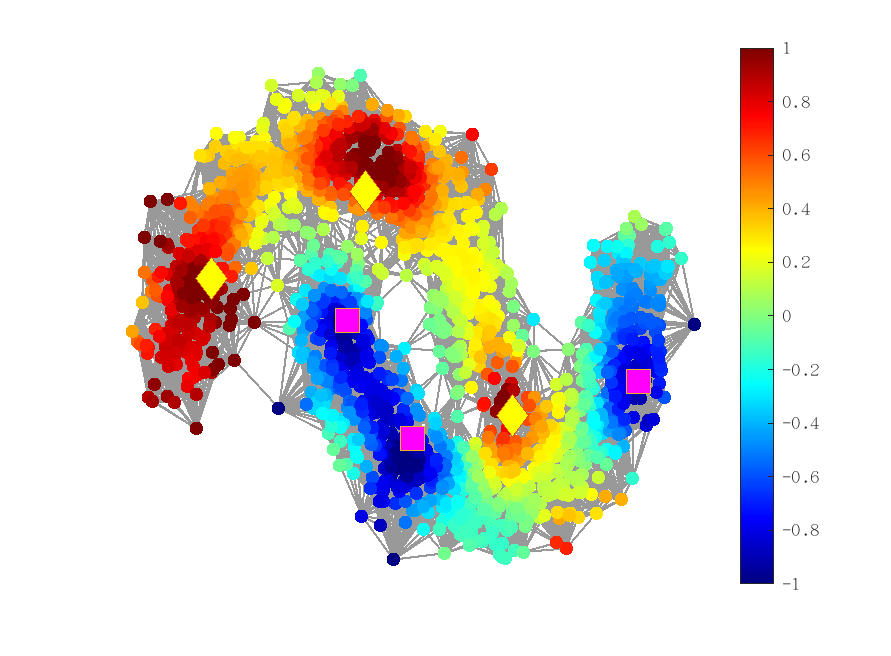}}

    \subfigure[$\mathbf{GBa100}$]{	
		\includegraphics[width=0.28\linewidth]{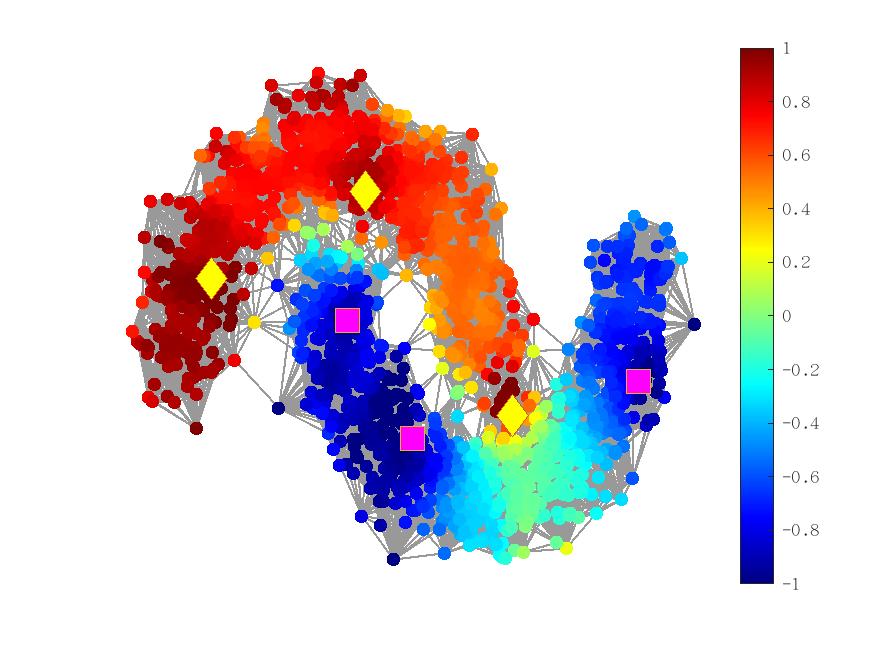}}\hspace{0.5cm}
    \subfigure[$\mathbf{MAE}$ vs $k_{b}$]{	
		\includegraphics[width=0.28\linewidth]{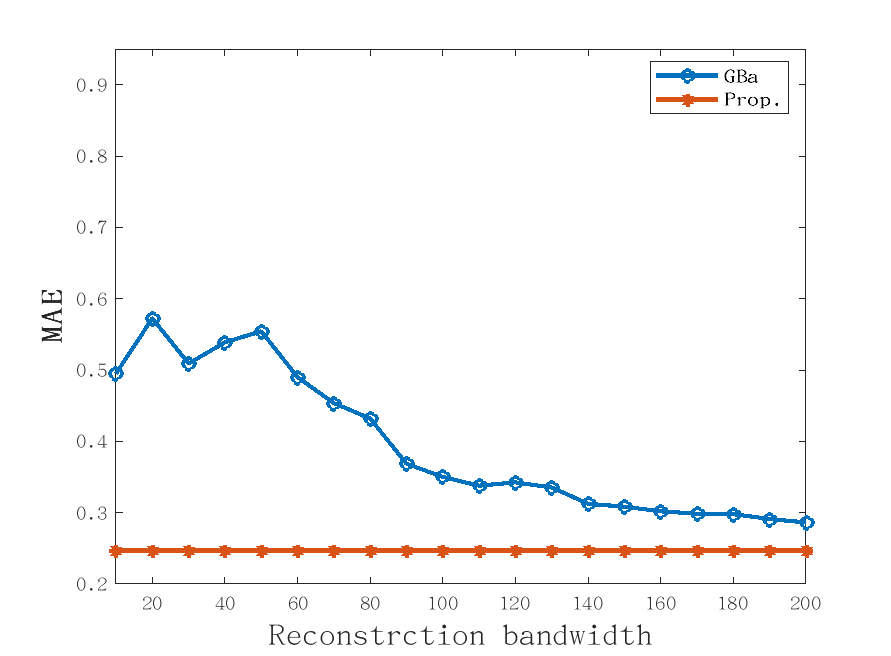}}

\caption{Two moons data: (a) ``two\_moons" data point on the 2D plane; (b) Labels of data on the graph: red means the label is 1 and blue means -1; (c)$\sim$(g) The prediction results of different methods in the case of $\ell=6$: yellow diamonds and rose-red squares represent data points with known labels of 1 and -1 respectively; and (h) The comparison of the prediction $\mathbf{MAE}$ of the four methods under different reconstruction bandwidth $k_{b}$ with $\ell=6$.} \label{Com_Manifold_learning}
\end{figure*}

\section{Final prediction algorithm}\label{Sec_fianl_prediction_alg}
In the previous section, we assumed that the kernel function $\mathcal{K}$ and the weighted matrix $\mathbf{W}$ are known. Therefore, we discuss their constructions in this section. Indeed, the construction of $\mathbf{W}$ and $\mathcal{K}$ should depend on the feature vectors $\mathcal{V}=\{\mathbf{v}_{1},\mathbf{v}_{2},\dots,\mathbf{v}_{n}\}$. In \cite{2020Yang}, the authors provided a good strategy to obtain the feature vectors $\mathcal{V}=\{\mathbf{v}_{1},\mathbf{v}_{2},\dots,\mathbf{v}_{n}\}$, so we adopt this strategy in this paper.

\subsection{Construction of kernel function $\mathcal{K}$}
Based on the feature vectors $\mathcal{V}=\{\mathbf{v}_{1},\mathbf{v}_{2},\dots,\mathbf{v}_{n}\}$, the kernel function $\mathcal{K}:\mathcal{V}\rightarrow\mathbb{R}$ with the kernel gram matrix $\mathbf{K}=\Big(\mathcal{K}(\mathbf{v}_{i},\mathbf{v}_{j})\Big)_{\mathbf{v}_{i},\mathbf{v}_{j}\in\mathcal{V}}$ is often chosen as the Gaussian kernel
\begin{equation}\label{Gaussian_K}
\mathcal{K}(\mathbf{v}_{i},\mathbf{v}_{j})=e^{-\|\mathbf{v}_{i}-\mathbf{v}_{j}\|_{2}^{2}/2\sigma^{2}},~~\mathbf{v}_{i},\mathbf{v}_{j}\in \mathcal{V},
\end{equation}
with a constant parameter $\sigma>0$. In \cite{2020Yang}, the linear kernel function was proved to perform poorly in the prediction, and thus we do not consider such a kernel function in this paper.

\subsection{Construction of $\mathbf{W}$ on the graph}
The next issue is to construct the graph's adjacent matrix $\mathbf{W}$. In this paper, we adopt the strategy to construct $\mathbf{W}$ based on the assumption that two users have a large affinity if they are close in the feature space. The effectiveness of this strategy can be referred in the literature \cite{2006Manifold}.

A common choice of $\mathbf{W}$ is the heat kernel weights
\begin{equation}\label{Cons_W}
W_{ij}=e^{-\|\mathbf{v}_{i}-\mathbf{v}_{j}\|_{2}^{2}/4\epsilon},~~\mathbf{v}_{i},\mathbf{v}_{j}\in\mathcal{V},
\end{equation}
with a constant parameter $\epsilon>0$. Sometime we also perform a $k$-nearest neighbor sparse process to obtain a sparse graph. Once $\mathbf{W}$ is constructed, $\mathbf{L}=\mathbf{D}-\mathbf{W}$ can then be obtained immediately. Indeed, such a $k$-nearest graph can be obtained by the function ``\emph{gsp\_nn\_graph}'' in the GSP toolbox \cite{2016GSPBOX}.

Note that the strategy of constructing $\mathbf{W}$ we adopt in this paper is quite different from the two commonly strategies existing in \cite{2015GraphLearning} and \cite{2020Yang}. We let the edge weights of graph $\mathcal{G}$ be determined by the distance between two users in the feature space. Furthermore, once the feature vectors $\mathcal{V}=\{\mathbf{v}_{1},\mathbf{v}_{2},\dots,\mathbf{v}_{n}\}$ are obtained, $\mathbf{W}$ can be easily constructed by $\eqref{Cons_W}$, which requires much less time than the other two methods in \cite{2015GraphLearning} and \cite{2020Yang}. Most importantly, experimental results in Section $\ref{Sec_exp}$ show that $\mathbf{W}$ constructed in this way performs well in the terms of prediction accuracy.

\subsection{Final algorithm for prediction}
Finally, the final algorithm to predict a score matrix $\mathbf{S}$ for $n$ users and $m$ items can be organized as Algorithm \ref{Final_pre}.

Note that the kernel gram matrix $\mathbf{R}$ in Algorithm $\ref{Alg_solution}$ only need to be computed once, while $\mathbf{d}$ will be computed repeatedly $m$ times based on different known labels to predict the all $m$ items.

\begin{algorithm}[t]
\caption{Final algorithm to predict a score matrix $\mathbf{S}$ for $n$ users and $m$ items}\label{Final_pre}
\begin{algorithmic}[1]
\Require
Given a sparse rating matrix $\mathbf{S}$ and two regularization parameters $\lambda$ and $\gamma$.
\Ensure
 \State Generate the feature vectors $\mathcal{V}=\{\mathbf{v}_{1},\mathbf{v}_{2},\dots,\mathbf{v}_{n}\}$ based on \cite{2020Yang};
\State Choose $\mathcal{K}$ as a Gaussian kernel by $\eqref{Gaussian_K}$ with the corresponding kernel gram matrix $\mathbf{K}=\Big(\mathcal{K}(\mathbf{v}_{i},\mathbf{v}_{j})\Big)_{\mathbf{v}_{i},\mathbf{v}_{j}\in\mathcal{V}}$;
\State  Obtain the adjacent matrix $\mathbf{W}$ of the $k$-nearest sparse graph constructed by using the function ``\emph{gsp\_nn\_graph}'' in the GSP toolbox \cite{2016GSPBOX} based on the the feature vectors $\mathcal{V}$, and compute the graph Laplacian $\mathbf{L}$.
\State Let $X_{\ell}=\{(\mathbf{v}_{1},y_{1}),(\mathbf{v}_{2},y_{2}),...,(\mathbf{v}_{\ell},y_{\ell})\}$ be the set of labeled samples for a given item, then we solve the prediction model by using Algorithm $\ref{Alg_solution}$ and obtain $\mathbf{f}^{*}=\mathbf{R}_{\ell}\mathbf{d}$. This process will be repeated $m$ times to predict the whole $\mathbf{S}$.
\State\Return $\mathbf{S}$.
\end{algorithmic}
\end{algorithm}

\section{Numerical experiments}\label{Sec_exp}
In this section, we conduct several experiments to test the performance of the proposed method (labeled $\mathbf{Prop}.$). Experiments are performed on the following three data sets:
\begin{itemize}
  \item Two moons data: ``two\_moons" point cloud data with 2000 data points which is commonly used in the GSP toolbox \cite{2016GSPBOX}.
    \item MovieLens-100k data: MovieLens-100k data set contains 100000 ratings (1$\sim$5) given by 943 users for 1682 movies. At least 20 movies are rated by each user. This dataset generates two groups data sets \cite{2020Yang}:

    $1)$ u1.base$\sim$u5.base and u1.test$\sim$u5.test. Each pair of u$\ast$.base-u$\ast$.test is a $80\%\setminus20\%$ splits of the whole dataset. Any two pairs have disjoint test sets.

       $2)$ ua.base, ua.test and ub.base, ub.test. Each pair is obtained by splitting the whole data set into a training set and a test set.  The two test set (i.e., ua.test and ub.test) contains exactly 10 ratings per user, and they are also disjoint.

  \item Netflix data: Netflix data set consists of over 100 million ratings (1$\sim$5). These ratings are given by 480189 Netflix customers on 17770 thousand movie titles. In this dataset, about 1.12\% of the ratings are known \cite{2020Yang}.

\end{itemize}
 For the $\mathbf{GBa}$ method, we set different reconstruction bandwidth $k_{b}$ (i.e., $k_{b}=10,20,50,100$) to test its performance relative to $k_{b}$. We label the $\mathbf{GBa}$ methods corresponding to $k_{b}=10,20,50,100$ as $\mathbf{GBa10}$, $\mathbf{GBa20}$,$\mathbf{GBa50}$, $\mathbf{GBa100}$ respectively. As a performance metric, the mean absolute error ($\mathbf{MAE}$) will be used to measure the prediction error in this paper. When comparing the computational efficiency of different methods, we only compute the running time of predicting the whole score matrix $\mathbf{S}$ with known $\mathbf{K}$ and $\mathbf{L}$. Furthermore, we should point out that all the programs run on a Dell T7920 workstation (Intel(R) Xeon(R) Gold 5122 CPU @ 1.70 GHz, 62 GB RAM).

\subsection{Synthetic Data: Two moons data}
In this section, we test the performance of different methods on the prediction problem of the two moons data$\footnote{Note that we remove the $\mathbf{Ori}$ method from the comparison since it has the same solution as the proposed method.}$. Figure \ref{Com_Manifold_learning}(a) shows the distribution of the two moons data on the 2D plane$\footnote{At this point, the feature vector of each data point corresponds to its coordinate on the 2D plane.}$. In the experiments, we first use the function ``\emph{gsp\_nn\_graph}'' \cite{2016GSPBOX} to obtain the $k$-nearest ($k=30$) sparse graph of the data points. And we artificially treat the first 1000 data points as one class (labeled as 1) and the last 1000 data points as another class (labeled as -1), as shown in Figure \ref{Com_Manifold_learning}(b). For prediction, we empirically set the parameters to be $\sigma=0.1$, $\lambda=0.0001$ and $\gamma=0.005$.

 First, we fix the number of known labels ($\ell=6$) to conduct the experiments. Specifically, we mark the labels of three data points in each class as known, i.e., yellow diamonds and rose-red squares shown in Figure \ref{Com_Manifold_learning}(c)$\sim$(g), and then predict the labels of other data points based on these six known labels. The prediction results are shown in Figure \ref{Com_Manifold_learning}(c)$\sim$(g). Combined with the real labels of data shown in Figure \ref{Com_Manifold_learning}(b), it can be observed from Figure \ref{Com_Manifold_learning}(c)$\sim$(g) that the proposed method predicts the labels of all data points more accurately than all $\mathbf{GBa}$ methods.

 To show the performance of different methods more intuitively, we compute the prediction $\mathbf{MAE}$ for all methods, and then show the results as a function of $k_{b}$ in Figure \ref{Com_Manifold_learning}(h). We observe that the curves of the proposed method is horizontal since it is independent of $k_{b}$. Furthermore, it can be seen that as $k_{b}$ increases, the prediction $\mathbf{MAE}$ of $\mathbf{GBa}$  gets smaller and closer to that of the proposed method. This is because the approximate solution obtained by $\mathbf{GBa}$ gets closer and closer to the optimal solution of the original model as $k_{b}$ increases. Finally, it is worth emphasizing that the proposed method always perform the better than $\mathbf{GBa}$.

\begin{figure*}[!t]
  \centering
    \subfigure[]{	
		\includegraphics[width=0.23\linewidth]{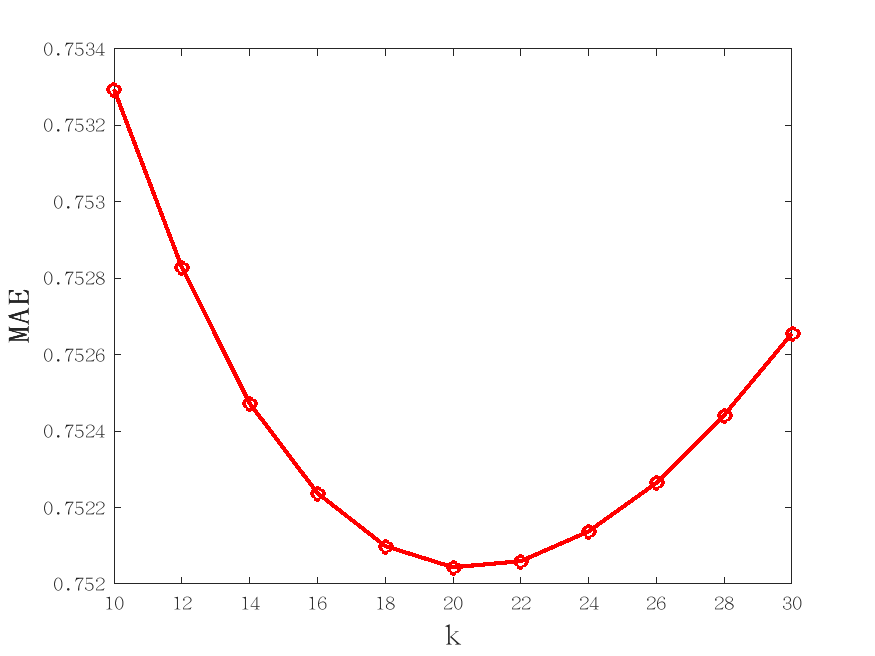}}
    \subfigure[]{	
		\includegraphics[width=0.23\linewidth]{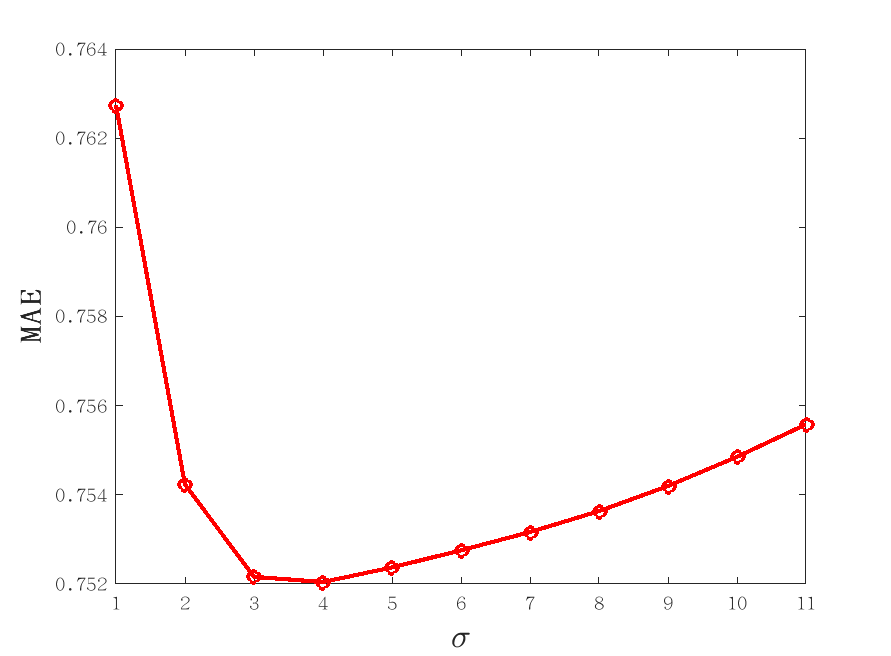}}
    \subfigure[]{	
		\includegraphics[width=0.23\linewidth]{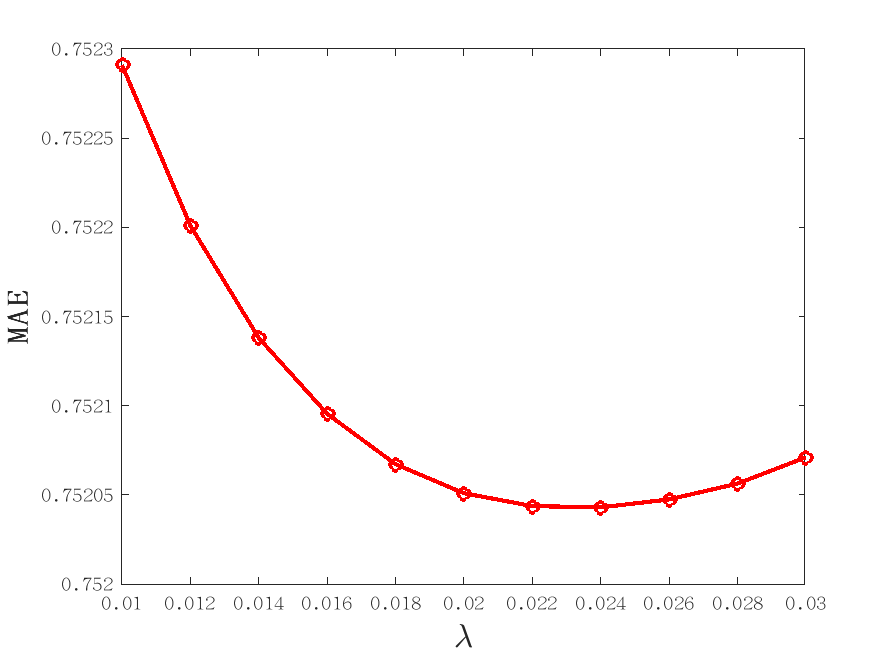}}
    \subfigure[]{		
		\includegraphics[width=0.23\linewidth]{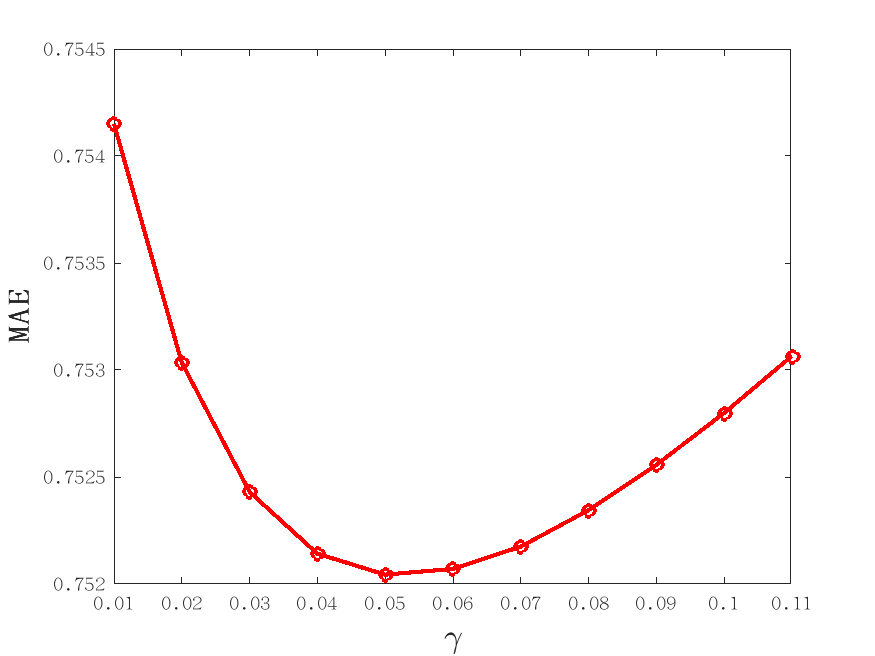}}

\caption{Learning of user-based prediction model parameters $k,\sigma,\lambda,\gamma$: (a) $\sigma=4$, $\lambda=0.022$, $\gamma=0.05$, (b) $k=20$, $\lambda=0.022$, $\gamma=0.05$, (c) $k=20$, $\sigma=4$, $\gamma=0.05$, and (d) $k=20$, $\sigma=4$, $\lambda=0.022$.} \label{Parameter_learning_user}
\end{figure*}

\begin{figure*}[!t]
  \centering
    \subfigure[User-based]{	
		\includegraphics[width=0.25\linewidth]{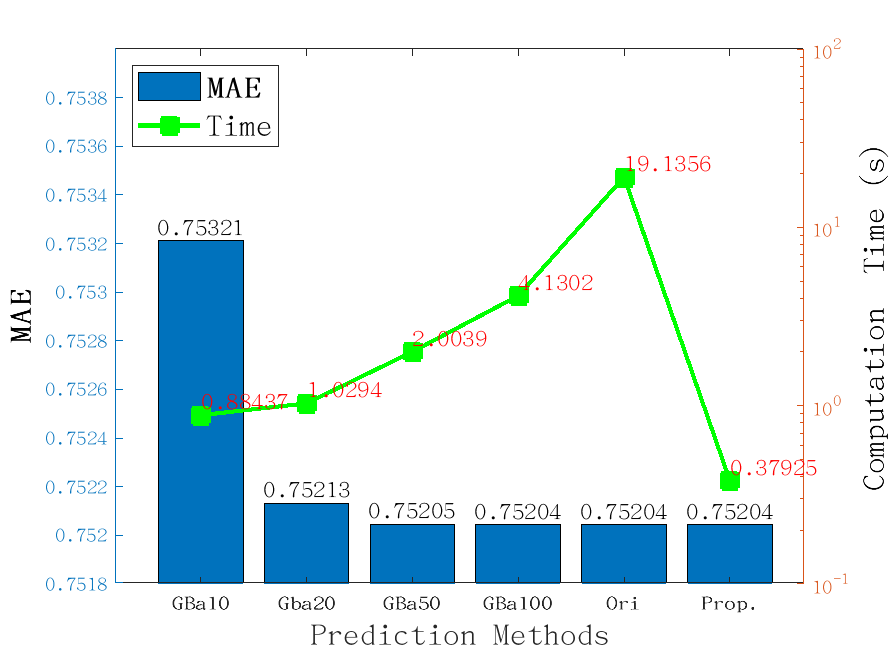}}\hspace{2cm}
    \subfigure[Item-based]{
        \includegraphics[width=0.25\linewidth]{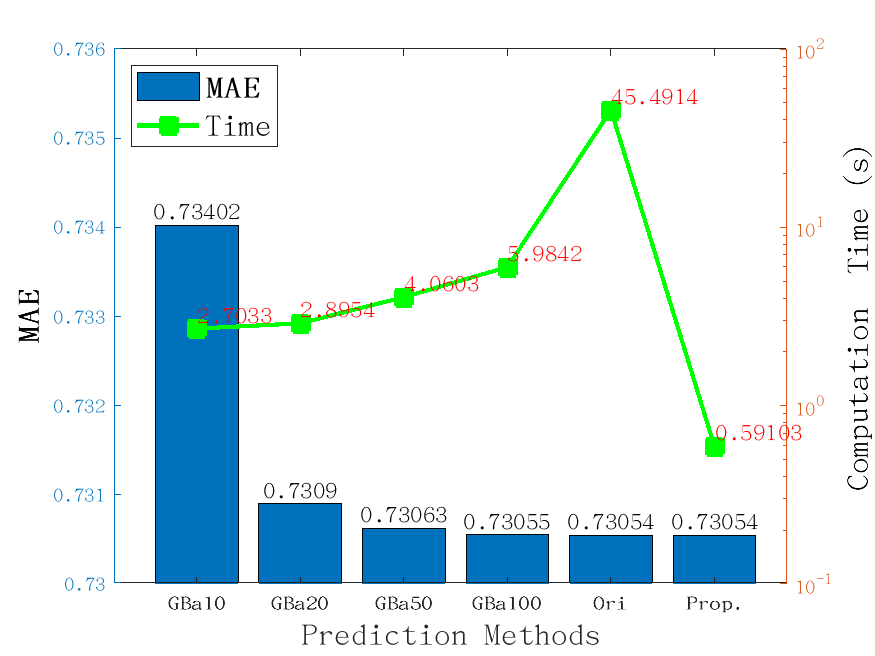}}
\caption{ Prediction results on the u1.$\ast\sim$u5.$\ast$ data sets: the histogram corresponds to $\mathbf{MAE}$ (i.e., the left coordinate axis), and the polyline corresponds to the running time (i.e., the right coordinate axis).} \label{Prediction_on_u1_u5}
\end{figure*}

\begin{figure*}[!t]
  \centering
    \subfigure[User-based]{	
		\includegraphics[width=0.25\linewidth]{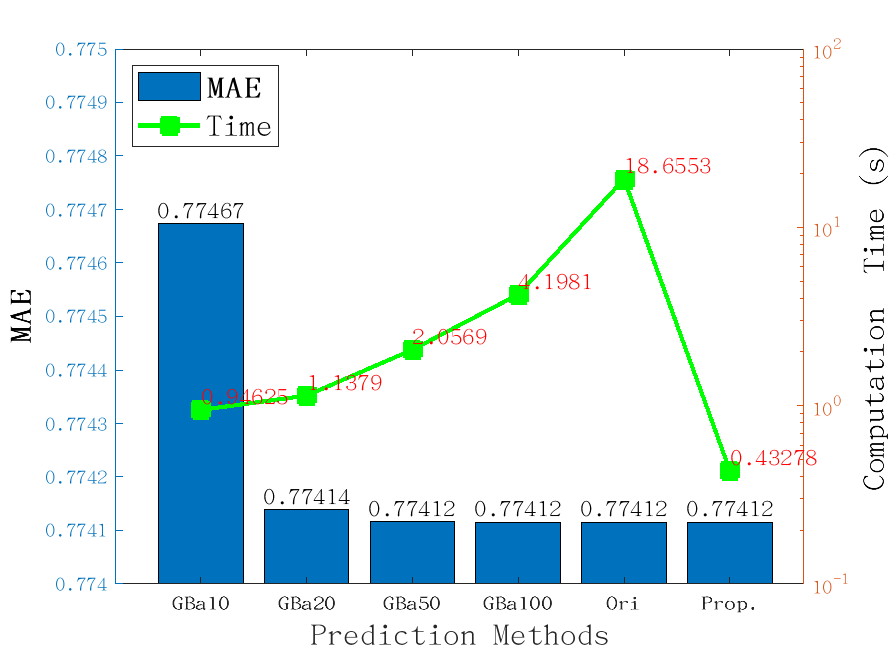}}\hspace{2cm}
    \subfigure[Item-based]{
        \includegraphics[width=0.25\linewidth]{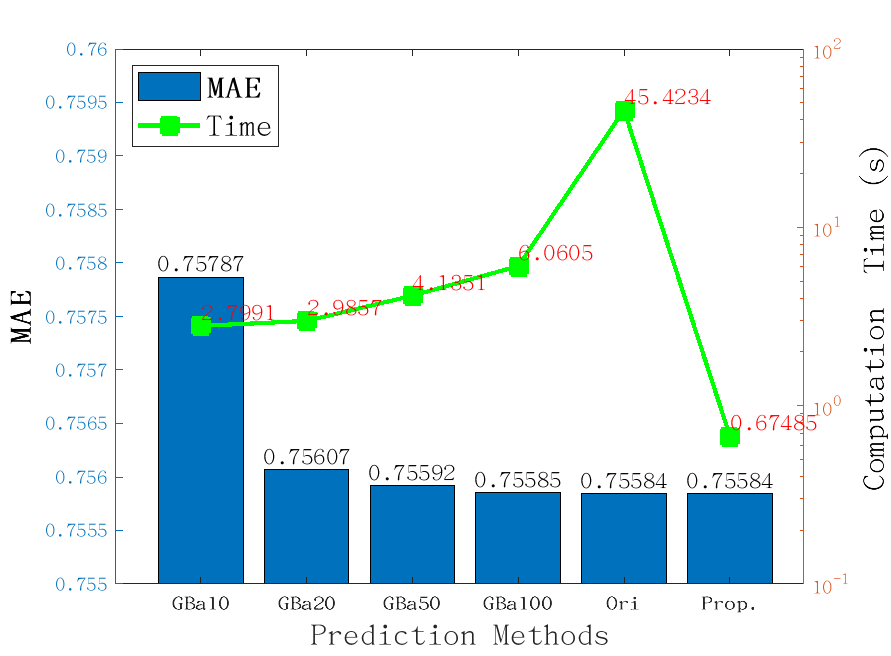}}
\caption{ Prediction results on the ua.$\ast\sim$ub.$\ast$ data sets: the histogram corresponds to $\mathbf{MAE}$ (i.e., the left coordinate axis), and the polyline corresponds to the running time (i.e., the right coordinate axis).} \label{Prediction_on_ua_ub}
\end{figure*}

\subsection{Prediction on MovieLens-100k data}\label{Sec_pre_MovieLens100K}
In this section, we run the experiments on the MovieLens-100k data set. 
Two groups of experiments are conducted as follows.
The first group trains on the data sets u1.base$\sim$u5.base and tests on u1.test$\sim$u5.test. The second group run on the data sets ua.base, ua.test and ub.base, ub.test.

Since the construction of $\mathbf{W}$ in the proposed method is different from that in \cite{2020Yang}, it needs to relearn the parameters in the prediction model. Therefore, we first use the data sets u1.$\ast\sim$u5.$\ast$ to select the best parameters for prediction.
For each pair of subsets, we use u$\ast$.base to construct the feature vectors $\mathcal{V}$.
We then predict the entries in u$\ast$.test and compute the absolute error between the predicted ratings and the ground true ones.
We compute the global ($\mathbf{MAE}$) across the five test sets to select the best parameters$\footnote{Note that when constructing the adjacency matrix $\mathbf{W}$ using the function ``\emph{gsp\_nn\_graph}'', we only learn the best $k$ to obtain the $k$-nearest sparse graph,  while other parameters are set as the default parameters of ``\emph{gsp\_nn\_graph}''.}$. The results of parameter learning of user-based prediction are shown in Figure \ref{Parameter_learning_user}. Finally, we select the parameters as $k=20$, $\sigma=4$, $\lambda=0.022$, $\gamma=0.05$ for both user-based and item-based predictions.

Using the above learned parameters, we conduct experiments on u1.$\ast\sim$u5.$\ast$ to compare the proposed method with other methods. In the experiments, we compute the global $\mathbf{MAE}$ between the predicted ratings and the ground true ones in u$\ast$.test. The results of the global $\mathbf{MAE}$ and the mean running time are shown in Figure $\ref{Prediction_on_u1_u5}$. From the histogram in Figure $\ref{Prediction_on_u1_u5}$, we observe that the proposed method and $\mathbf{Ori}$ have the same global $\mathbf{MAE}$ and perform the best in terms of prediction accuracy, where the global $\mathbf{MAE}$ reaches 0.75204 and 0.73054 for user-based and item-based predictions respectively. It can also be obseved that as $k_{b}$ increases, the global $\mathbf{MAE}$ of $\mathbf{GBa}$ gets smaller and closer to that of $\mathbf{Ori}$ and the proposed method. The reason for this is that the approximate solution of $\mathbf{GBa}$ is closer to the optimal solution of the original model as $k_{b}$ increases. However, at the same time, the polyline in Figure $\ref{Prediction_on_u1_u5}$ shows that $\mathbf{GBa}$ requires more computation time as $k_{b}$ increases. It thus means that compared to $\mathbf{GBa}$, the proposed method significantly reduces the computational cost, while accurately solving the original model and maintaining a good prediction accuracy.


\begin{table*}[!t]

	\centering
\footnotesize
	\begin{tabular}{c|c|cccccc|}
    \hline
    \multicolumn{2}{c|}{\multirow{2}{*}{$\mathbf{MAE}$}} & \multicolumn{6}{c|}{User-first}\\
    \cline{3-8}
        \multicolumn{2}{c|}{~}
         & $\mathbf{GBa10}$ & $\mathbf{GBa20}$ & $\mathbf{GBa50}$ & $\mathbf{GBa100}$ & $\mathbf{Ori}$ & $\mathbf{Prop}.$ \\
    \hline

    \multirow{5}{*}{\textbf{Ub}} & 1\%
          & 0.7793  &  0.7787 & $\mathbf{0.7785}$ & $\mathbf{0.7785}$ & $\mathbf{0.7785}$  & $\mathbf{0.7785}$  \\

          & 2\%
          & 0.7484  &  0.7469 & $\mathbf{0.7467}$ & $\mathbf{0.7467}$ & $\mathbf{0.7467}$  & $\mathbf{0.7467}$  \\

          & 3\%
          & 0.7352  &  0.7339 & $\mathbf{0.7337}$ & $\mathbf{0.7337}$ & $\mathbf{0.7337}$  & $\mathbf{0.7337}$  \\

          & 4\%
          & 0.7265  &  0.7247 & $\mathbf{0.7244}$ & $\mathbf{0.7244}$ & $\mathbf{0.7244}$  & $\mathbf{0.7244}$  \\

          & 5\%
          & 0.7240  &  0.7214 & $\mathbf{0.7211}$ & $\mathbf{0.7211}$ & $\mathbf{0.7211}$  & $\mathbf{0.7211}$   \\
      \hline

    \multirow{5}{*}{\textbf{Ib}} & 1\%
          & 0.7894  &  0.7892 & 0.7893 & $\mathbf{0.7892}$ & $\mathbf{0.7892}$  & $\mathbf{0.7892}$   \\

          & 2\%
          & 0.7483  &  0.7467 & 0.7464 & $\mathbf{0.7463}$ & $\mathbf{0.7463}$  & $\mathbf{0.7463}$   \\

          & 3\%
          & 0.7316  &  0.7287 & 0.7282 & $\mathbf{0.7281}$ & $\mathbf{0.7281}$  & $\mathbf{0.7281}$   \\

          & 4\%
          & 0.7192  &  0.7165 & 0.7160 & $\mathbf{0.7159}$ & $\mathbf{0.7159}$  & $\mathbf{0.7159}$   \\

          & 5\%
          & 0.7120  &  0.7113 & 0.7108 & $\mathbf{0.7107}$ & $\mathbf{0.7107}$  & $\mathbf{0.7107}$    \\

    \hline
    \multicolumn{2}{c|}{\multirow{2}{*}{$\mathbf{MAE}$}} & \multicolumn{6}{c|}{Item-first}\\
    \cline{3-8}
    \multicolumn{2}{c|}{~}
    & $\mathbf{GBa10}$ & $\mathbf{GBa20}$ & $\mathbf{GBa50}$ & $\mathbf{GBa100}$ & $\mathbf{Ori}$ & $\mathbf{Prop}.$\\
    \hline
    \multirow{5}{*}{\textbf{Ub}} & 1\%
          & 0.7720  &  0.7715 & $\mathbf{0.7714}$ & $\mathbf{0.7714}$ & $\mathbf{0.7714}$  & $\mathbf{0.7714}$ \\

          & 2\%
          & 0.7456  &  0.7445 & $\mathbf{0.7443}$ & $\mathbf{0.7443}$ & $\mathbf{0.7443}$  & $\mathbf{0.7443}$ \\

          & 3\%
          & 0.7348  &  0.7332 & $\mathbf{0.7330}$ & $\mathbf{0.7330}$ & $\mathbf{0.7330}$  & $\mathbf{0.7330}$ \\

          & 4\%
          & 0.7271  &  0.7249 & $\mathbf{0.7246}$ & $\mathbf{0.7246}$ & $\mathbf{0.7246}$  & $\mathbf{0.7246}$ \\

          & 5\%
          & 0.7247  &  0.7228 & $\mathbf{0.7225}$ & $\mathbf{0.7225}$ & $\mathbf{0.7225}$  & $\mathbf{0.7225}$ \\
      \hline

    \multirow{5}{*}{\textbf{Ib}} & 1\%
          & 0.7554  &  $\mathbf{0.7559}$ & $\mathbf{0.7559}$ & $\mathbf{0.7559}$ & $\mathbf{0.7559}$  & $\mathbf{0.7559}$ \\

          & 2\%
          & 0.7292  &  0.7275 & 0.7272 & $\mathbf{0.7271}$ & $\mathbf{0.7271}$  & $\mathbf{0.7271}$ \\

          & 3\%
          & 0.7188  &  0.7154 & 0.7151 & $\mathbf{0.7149}$ & $\mathbf{0.7149}$  & $\mathbf{0.7149}$ \\

          & 4\%
          & 0.7101  &  0.7077 & 0.7071 & 0.7070 & $\mathbf{0.7069}$  & $\mathbf{0.7069}$ \\

          & 5\%
          & 0.7056  &  0.7036 & 0.7030 & $\mathbf{0.7028}$ & $\mathbf{0.7028}$  & $\mathbf{0.7028}$ \\

    \hline
	\end{tabular}
\caption{Prediction results on Netflix data set: the known labels accounts for about 1\% of each data subset. \textbf{Ub} and \textbf{Ib} represent user-based and item-based prediction respectively.}
 \label{Pre_MAE_Netflix}
\end{table*}

\begin{figure*}[!t]
  \centering
    \subfigure[User-first: user-based]{	
		\includegraphics[width=0.23\linewidth]{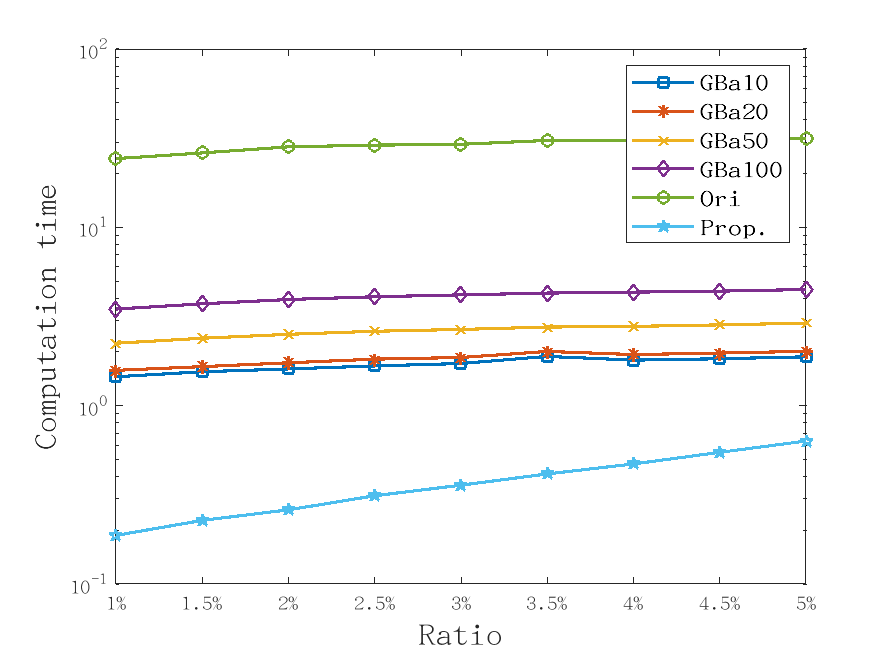}}
    \subfigure[User-first: item-based]{
        \includegraphics[width=0.23\linewidth]{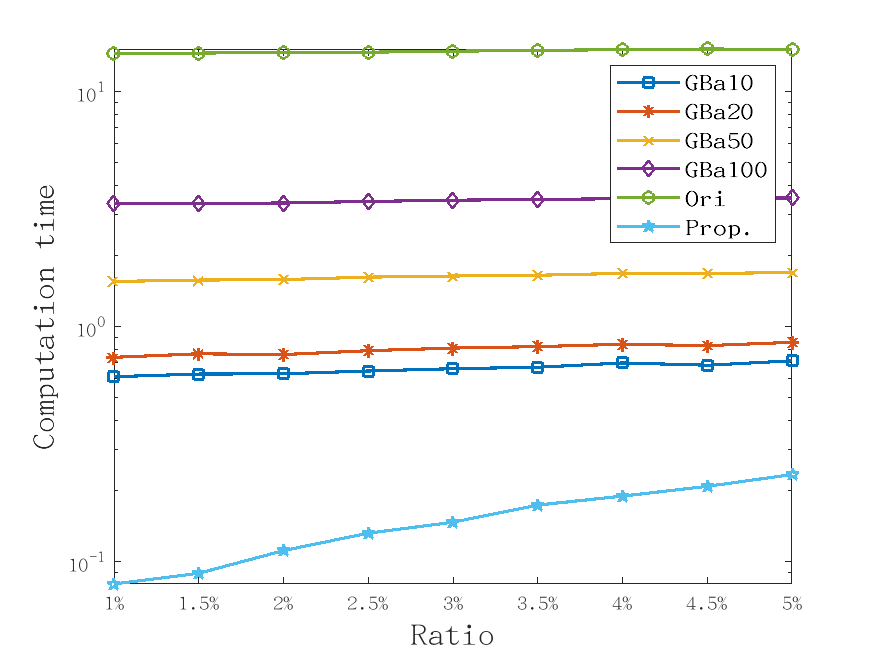}}
    \subfigure[Item-first: user-based]{	
		\includegraphics[width=0.23\linewidth]{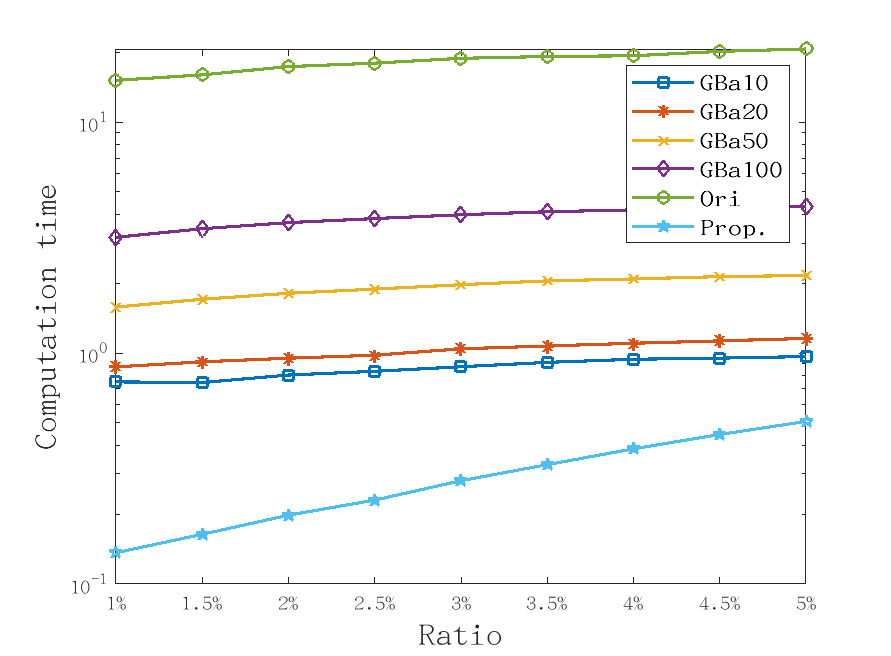}}
    \subfigure[Item-first: item-based]{
        \includegraphics[width=0.23\linewidth]{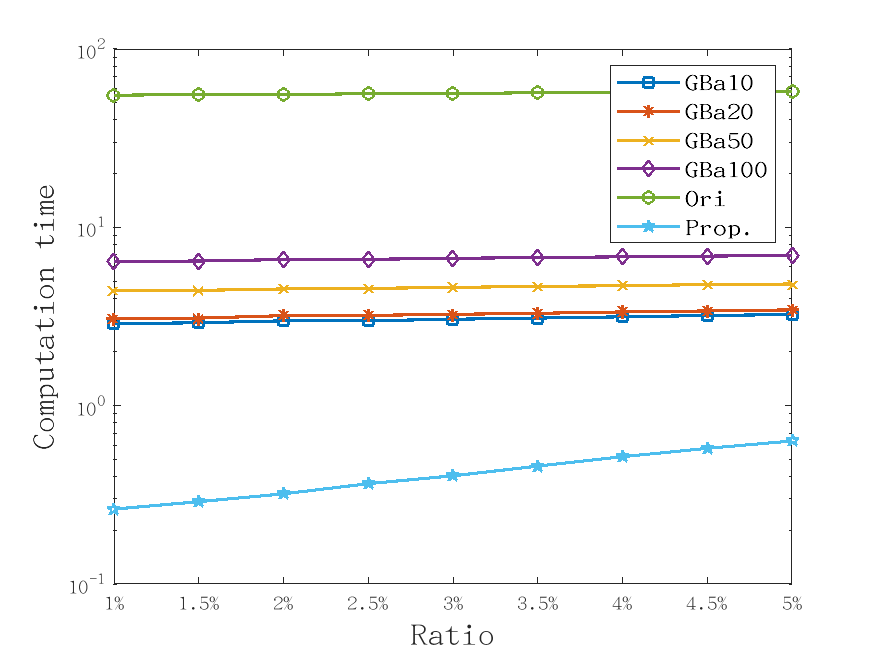}}
\caption{ Comparisons of the running time under different ratios of known labels on the Netflix data sets.} \label{Pre_time_Netflix}
\end{figure*}

Finally, to investigate the generalization ability of the proposed method, we further use the same learned parameters to conduct several experiments on ua.$\ast$ and ub.$\ast$ data sets. The global $\mathbf{MAE}$ and the mean running time are shown in Figure $\ref{Prediction_on_ua_ub}$. It can be easily observed that the proposed method runs faster that other methods and performs best in terms of prediction accuracy, reaching 0.77412 for user-based prediction and 0.75584 for item-based prediction. In a word, the results on ua.$\ast$ and ub.$\ast$  are similar to those on u1.$\ast\sim$u5.$\ast$, thus highlighting the robustness of the proposed method to different rating matrices.

\begin{table*}[!t]
 \centering
\begin{tabular}{c|c|ccccc}
    \hline

     \multicolumn{2}{c|}{\multirow{2}{*}{Speedup factor}} & \multirow{2}{*}{$\mathbf{GBa10}$} & \multirow{2}{*}{$\mathbf{GBa20}$}
     & \multirow{2}{*}{$\mathbf{GBa50}$}& \multirow{2}{*}{$\mathbf{GBa100}$} & \multirow{2}{*}{$\mathbf{Ori}$}\\
      \multicolumn{2}{c|}{~}&~&~&~& &~\\
    \hline

   \multicolumn{1}{c|}{\multirow{2}{*}{User-first}} & user-based  &  16.73   &  17.31  &  22.11 & 29.89 & 640.59   \\
    \cline{2-7}
                                & item-based  & 21.96  &  23.03 & 29.77 & 39.81 & 870.79   \\
      \hline
    \multirow{2}{*}{Item-first} & user-based  & 18.62  & 19.36 & 25.15  & 33.95 & 754.44   \\
    \cline{2-7}
                                & item-based  & 21.14 & 21.55  & 27.08 & 35.73  & 876.39 \\
    \hline
	\end{tabular}
\caption{Speedup factors of our method with respect to the alternative methods.}
\label{Speedup_factor}
\end{table*}

\subsection{Prediction on Netflix}
Next, we further conduct several experiments on Netflix data to estimate the robustness of the proposed method against different data sets.

The experiments are conducted using the same data used in \cite{2020Yang} which is divided into two groups. The first group is the item-first data set, which contains five data subsets and each data subset contains 1777 items and 1000 users.  The other group is the user-first data set, which also contains five data subsets while each data subset contains 888 items and 1500 users. In the experiments, the known ratings accounting for about 1\% of each data subset are randomly selected to form the training set, and the rest serves as test set.

Using the same parameters as Section \ref{Sec_pre_MovieLens100K}, we conduct the experiments on the user-first and item-first data sets respectively. The results of the global $\mathbf{MAE}$ and the mean running time of prediction each data subset are shown in Table \ref{Pre_MAE_Netflix} and Figure \ref{Pre_time_Netflix} respectively. From Table \ref{Pre_MAE_Netflix}, it can be easily seen that all methods have better prediction $\mathbf{MAE}$ as the ratio of the known labels increases, and the proposed method always outperforms other methods. Furthermore, Figure \ref{Pre_time_Netflix} shows that in terms of computational cost, the proposed method are significantly faster than other methods. Specially, Table \ref{Pre_MAE_Netflix} and Figure \ref{Pre_time_Netflix} together confirm again that as $k_{b}$ increases, $\mathbf{GBa}$ performs better but requires more computation time.

\begin{table*}[!t]
\renewcommand{\arraystretch}{1.5}
	\centering

	\begin{tabular}{c|ccccccc}
    \hline
    \multirow{2}{*}{Methods} & \multirow{2}{*}{$\textrm{UB}_{\textrm{BCF}}$} & \multirow{2}{*}{$\textrm{UB}_{\textrm{HUS}}$} & \multirow{2}{*}{SVD} & \multirow{2}{*}{SVD++}
    & \multirow{2}{*}{PMF} &\multirow{2}{*}{BPMF} & \multirow{2}{*}{FPMF} \\
       &   &    &   &   &   &   &   \\
    \hline
    Ave. $\mathbf{MAE}$ & 0.7751 & 0.7870 & 0.8147 & 0.7864 & 0.7370 & 0.7614 & 0.7318 \\
    Std.  & 0.0042 & 0.0019 & 0.0115 & 0.0069 & 0.0082 & 0.0047 & 0.0088 \\
   \hline
   \multirow{2}{*}{Methods}  & \multicolumn{2}{c}{$\mathbf{GBa}$ in \cite{2020Yang}}  & \multicolumn{2}{c}{$\mathbf{Prop}.$}\\
    \cline{2-3}\cline{4-5}
        & \textbf{Ub} &\textbf{Ib}  & \textbf{Ub}&\textbf{Ib} \\
    \hline
    Ave. $\mathbf{MAE}$  & 0.7546 & 0.7314  & 0.7520 &$\mathbf{0.7305}$\\
    Std.  & 0.0056 & 0.0054  & 0.0053 &0.0055\\
   \hline
	\end{tabular}
\caption{Comparison results of prediction $\mathbf{MAE}$ with popular methods on the u1.$\ast\sim$u5.$\ast$ data sets. \textbf{Ub} and \textbf{Ib} represent user-based and item-based prediction respectively.}
 \label{Com_pop_Methods}
\end{table*}

\subsection{Speedup factor for running time on large-scale data}
Finally, in order to further compare the running time of each method on large-scale data (i.e., both $n$ and $m$ are both large), we merge the above five item-first datasets to obtain one item-first dataset. It contains 4,597 users and 8,885 items with 629,003 known ratings accounting for about 1.54\% of the total. At the same time, the five user-first datasets are merged into one user-first dataset, which contains 6,620 users and 4,440 items with 462,106 known ratings accounting for about 1.57\% of the total. Then, we test the running time of each method on the two merged datasets separately. In the experiments, the known ratings accounting for about 1\% of each dataset are randomly selected as the known labels to predict other ratings.

To compare the running time more conveniently, we take the proposed method as a benchmark, and compute the Speedup factor with respect to the alternative methods, i.e,
$$\textrm{Speedup~factor}:=\frac{\textrm{Running~time~of~the~alternative~method}}{\textrm{Running~time~of~the~proposed~method}}.$$

Finally, Table \ref{Speedup_factor} shows the comparison results of running time on the two merged data sets. From the results of Table \ref{Speedup_factor} we observe that, among all methods, the proposed method runs faster than other methods. Specifically, the proposed method is $>15$ times faster than $\mathbf{GBa10}$,  $>16$ times faster than $\mathbf{GBa20}$, $>20$ times faster than $\mathbf{GBa50}$, $>25$ times faster than $\mathbf{GBa100}$ and $>600$ times faster than $\mathbf{Ori}$. It thus confirms again that the proposed method outperforms $\mathbf{GBa}$ and $\mathbf{Ori}$ much more as $k_{b}$ increases. 

\subsection{Experiment analysis}
In this section, we left with some important comments in terms of prediction accuracy and computational cost on the experiment results.

$\mathbf{Prediction~accuracy}$: First, the experiment results show that, the proposed model \eqref{Equivalent_model} solves the original prediction model accurately, and thus obtains the same prediction accuracy as $\mathbf{Ori}$. We also observe that as $k_{b}$ increases, the approximate solution of $\mathbf{GBa}$ is closer to the solution of the original model, but at the same time it requires more computation time.

Furthermore, we compare the proposed method with other popular prediction methods on the u1.$\ast\sim$u5.$\ast$ data sets in terms of prediction accuracy, namely: $\textrm{UB}_{\textrm{BCF}}$ \cite{2015Patra}, $\textrm{UB}_{\textrm{HUS}}$ \cite{2017Wang}, SVD \cite{1990Indexing}, SVD++ \cite{2008Yehuda}, PMF \cite{2007PMF}, BPMF \cite{2008BPMF}, FPMF \cite{2020Feng} and $\mathbf{GBa}$ in \cite{2020Yang}. We directly quoted the experimental results of other methods from \cite{2020Yang}. Finally, the comparison results of prediction $\mathbf{MAE}$ are shown in Table \ref{Com_pop_Methods}. It can be easily observed that the prediction $\mathbf{MAE}$ of the proposed method can arrives at 0.7305 for item-based prediction which is superior to other methods.

$\mathbf{Computational~cost}$: The most worth mentioning is that the proposed method significantly reduces the computational cost of solving the original model \eqref{Ori_model} compared to $\mathbf{Ori}$ and $\mathbf{GBa}$ methods. As stated in Section $\ref{Sec_modifiedmodel}$, to predict ratings with $n$ users and $m$ items, the proposed method reduces the computational cost from $O(n^{2}m(k_b+\ell))$ ($O(n^{3}m)$) to $O(nm\ell+m\ell^3)$ to predict the last $m-1$ items compared to $\mathbf{GBa}$ ($\mathbf{Ori}$). Specifically, as shown in Table \ref{Speedup_factor}, the proposed method can drop the running time down $>15$ times compared to $\mathbf{GBa}$ and $>600$ times compared to $\mathbf{Ori}$. Furthermore, we can observe that using the proposed method, the reduction in computational cost is more significant for larger $n$ and $m$.


\section{Conclusion}\label{Sec_conclusion}
This paper proposes an equivalent prediction model to solve the original prediction model based on graph Laplacian regularization in recommendation system (RS). The proposed model allows us to find a solution of the original prediction model in a much low-dimensional subspace. Based on the proposed equivalent prediction model, an efficient method is designed skillfully to solve the original model accurately, and thus reduces the computational cost from $O(n^2m(k_b+\ell))$ to $O(nm\ell+m\ell^3)$ to predict the last $m-1$ items compared to the graph-based approximate method in \cite{2020Yang}. Finally, we propose a final algorithm based on the proposed equivalent prediction model, and the experimental results on the synthetic data and two commonly used real-world data sets show that the proposed method also maintains a good prediction accuracy.

\section*{Declaration of Competing Interest}
The authors declare that they have no known competing financial interests or personal relationships that could have
appeared to influence the work reported in this paper.
\section*{Acknowledgement}
This research was partially supported by National Natural Science Foundation of China (Nos: 12171488),
Guangdong Province Key Laboratory of Computational Science at the Sun Yat-sen University (2020B1212060032), and the Research Grants Council of the Hong Kong Special Administrative Region, China, under Project C1013-21GF.



\section*{References}
  \bibliographystyle{elsarticle-num}
  \biboptions{square,numbers,sort&compress}
 \bibliography{elsarticle-template-num-names}

\end{document}